\renewcommand{\ALG@beginalgorithmic}{\scriptsize}
\pgfplotsset{compat=1.15}
\newtheorem{theorem}{Theorem}
\newtheorem{definition}{Definition}
\newtheorem{proposition}{Proposition}
\newtheorem{corollary}{Corollary}
\newtheorem{lemma}{Lemma}
\newcommand{\round}[1]{\lfloor #1 \rceil}
\newcommand{\db}{\pckeystyle{db}}
\newcommand{\qu}{\pckeystyle{qu}}
\newcommand{\ck}{\pckeystyle{ck}}
\newcommand{\eck}{\pckeystyle{eck}}
\newcommand{\pck}{\pckeystyle{pck}}
\newcommand{\ct}{\pckeystyle{ct}}
\newcommand{\ans}{\pckeystyle{ans}}
\newcommand{\pirstate}{\pckeystyle{st}}
\newcommand{\hint}{\textbf{H}}
\newcommand{\protocol}{ZipPIR}
\newcommand{\protocolsingle}{ZipPIR$_{C}$}
\newcommand{\protocolbatched}{ZipPIR$_{B}$}
\newcommand{\addkey}{\sk_{A}}
\newcommand{\paillierkey}{\sk_{P}}
\newcommand{\paillierpk}{\pk_{P}}
\newcommand{\lwekey}{\sk}
\newcommand{\mask}{\textbf{A}}
\newcommand{\eudist}[1]{\lVert #1 \rVert}
\newcommand{\appendixsectionname}{Appendix}
\newcommand{\appsection}[1]{\appendixsectionname~\ref{#1}}
\begin{document}
\title{HE is all you need: Smaller FHE Responses via Additive HE}
%
%

\author{
  \IEEEauthorblockN{
    Rasoul Akhavan Mahdavi\IEEEauthorrefmark{1}, 
    Abdulrahman Diaa\IEEEauthorrefmark{1}, 
    Florian Kerschbaum\IEEEauthorrefmark{1}
  }
  \IEEEauthorblockA{\IEEEauthorrefmark{1}University of Waterloo}
}


%

\maketitle

\begin{abstract}
Homomorphic Encryption (HE) is a commonly used tool for building privacy-preserving applications.
However, in scenarios with many clients and high-latency networks, communication costs due to large ciphertext sizes are the bottleneck.
In this paper, we present a new compression technique that uses an additive homomorphic encryption scheme with small ciphertexts to compress large homomorphic ciphertexts based on Learning with Errors (LWE).
Our technique exploits the linear step in the decryption of such ciphertexts to delegate part of the decryption to the server.
We achieve compression ratios up to 90\% which only requires a small compression key.
By compressing multiple ciphertexts simultaneously, we can over 99\% compression rate.
Our compression technique can be readily applied to applications which transmit LWE ciphertexts from the server to the client as the response to a query.
Furthermore, we apply our technique to private information retrieval (PIR) where a client accesses a database without revealing its query.
Using our compression technique, we propose \protocol{}, a PIR protocol which achieves the lowest overall communication cost among all protocols in the literature.
\protocol{} does not require any communication with the client in the preprocessing phase, making it a great solution for use cases of PIR with ephemeral clients or high-latency networks.

\end{abstract}


\section{Introduction}
\label{sec:introduction}

Privacy-preserving services such as compromised credential checks and private contact discovery~\cite{liProtocolsCheckingCompromised2019a, Thomas2019ProtectingAF} have become more prevalent in recent years.
However, privacy-preserving services are associated with a computation and communication overhead, which is a barrier to using these services in many cases.
If private services can be offered with less overhead, it will make them more amenable and users more likely to access and use them.
While these services are built using a variety of tools and techniques, one commonly used tool is homomorphic encryption.

Fully Homomorphic Encryption (FHE), often deemed the holy grail of encryption, is a form of encryption which permits computation on data in encrypted form, without the need to decrypt.
Initial homomorphic cryptosystems were impractically slow~\cite{gentryFullyHomomorphicEncryption2011,vandijkFullyHomomorphicEncryption2010}.
Over the last decade, FHE has advanced significantly and there have been many advances in functionality and many orders of magnitude improvements in runtime~\cite{brakerskiFullyHomomorphicEncryption2012,fan2012somewhat,brakerskiLeveledFullyHomomorphic2012,cheonHomomorphicEncryptionArithmetic2017a,chillottiTFHEFastFully2020}.

However, another important aspect of performance for FHE is its communication costs.
FHE is frequently used to design a protocol in a client-server model~\cite{liProtocolsCheckingCompromised2019a}, where the client uploads its encrypted inputs to the server.
The server processes the information and returns the encrypted result to the client.
While such protocols require one round trip, the bandwidth usage can be extremely high because homomorphic ciphertexts are much larger than their underlying plaintext.
The server can improve the runtime via better hardware and parallelization, but it is impractical to assume all clients have a reliable and high-bandwidth network connection.

The main reason for high communication costs in FHE protocols is large ciphertexts.
Homomorphic encryption schemes based on lattices, which are the most widely used today, have large ciphertext sizes~\cite{regevLatticesLearningErrors2009,lyubashevskyIdealLatticesLearning2010}.
In the aforementioned client-server setup, communication costs consist of uploaded ciphertexts, in the form of a query, and downloaded ciphertexts which are the response.
While there are effective methods to reduce upload costs~\cite{aliCommunicationComputationTradeoffs2021a,choTranscipheringFrameworkApproximate2021a,dobraunigPastaCaseHybrid2021,dobraunigRastaCipherLow2018,albrechtCiphersMPCFHE2015,canteautStreamCiphersPractical2018}, the same techniques can not be applied to reduce download costs and a different approach is required

\emph{Approaches to Reduce Download Costs.}
Works as early as the work of Naehrig et al.~\cite{naehrigCanHomomorphicEncryption2011a} suggested switching to smaller parameters in the LWE-based schemes before sending the ciphertexts back to the client.
In RLWE-based schemes, it is common to perform a modulus switching to the lowest prime modulus before sending the results back to the client~\cite{MicrosoftSEALRelease2023,cheonHomomorphicEncryptionArithmetic2017a}.
Such techniques result in ciphertexts that do not support further homomorphic operations but contain the same message as the original ciphertext.
Despite these improvements, the size of the ciphertext is still proportionate to $O(n\log q)$ in the case of LWE, where $n$ and $q$ are the ciphertext dimension and ciphertext modulus, respectively.
Other works propose compression of GSW ciphertexts~\cite{gentryCompressibleFHEApplications2019,brakerskiLeveragingLinearDecryption2019} and LTV ciphertexts~\cite{huImprovingEfficiencyHomomorphic2013}, but these ciphertexts are not usually sent over the network and are used internally in procedures such as bootstrapping~\cite{chillottiTFHEFastFully2020}.
Moreover, size reduction is only achieved when there are many ciphertexts to compress~\cite{gentryCompressibleFHEApplications2019, brakerskiLeveragingLinearDecryption2019}.

\emph{Our Approach.}
To achieve smaller response sizes in practical applications, we propose a technique to compress ciphertexts based on LWE and RLWE, which are commonly transmitted over the network.
Our technique exploits the linear step in LWE and RLWE decryption to delegate part of the decryption to the server.
Our idea is to use additively homomorphic encryption for this step which has a much smaller ciphertext expansion factor.
However, a naive application of this idea is neither communication nor computation efficient and we develop several optimizations to help this technique advance over the state-of-the-art.
Our evaluation shows that this results in size reduction from 6.8KB to 512 bytes (about 90\%) for a single LWE ciphertext with common parameters.
By compressing many ciphertexts at once, we can achieve higher compression ratios of up to 99\%.
The only necessary information for this size reduction is a small compression key which can be reused across interactions and to compress all ciphertexts encrypted under a specific public key.
Our techniques can be readily applied to applications which transmit LWE ciphertexts back to the client in a plug-and-play fashion.
Aside from the generic application of our technique, we also identified specific applications which benefit from compression.
One such application is private information retrieval.

\emph{Problem of PIR protocols.}
In a Private Information Retrieval (PIR) protocol, the client wishes to retrieve an element from a database such that the server learns nothing about the client's desired query.
In many real-world applications of PIR, the client may not have an established, permanent connection with the server.
Many existing PIR protocols are not suitable for ephemeral, temporary clients.
More specifically, they suffer from one of two limitations:
1) They require clients to send large cryptographic keys, which the server has to store.
This is a per-client storage on the server which would only be beneficial if the clients were making many queries.
2) The server gives a database-dependant hint to the client(s).
While this hint makes online queries very fast, it is a high upfront communication cost that needs to be amortized over many queries.

\emph{\protocol{}: Low-Communication PIR using Compression}
Using our compression technique, we devise a practical PIR protocol with preprocessing, dubbed \protocol{}, which does not require any hint to be transmitted to the clients but only our small compression keys from the client to the server. 
These compression keys are small and change between interactions.
Therefore, there is no need for per-client storage on the server.
Our construction only requires client-independent preprocessing by the server, which can easily be updated as the database changes.
\protocol{} requires only 200 KB - 500 KB of communication, even for large databases, and has a competitive runtime with other protocols operating in the same model.
In summary, \protocol{} has the lowest communication of all state-of-the-art PIR protocols in the literature \cite{castroWhisPIRStatelessPrivate2024, liHintlessSingleServerPrivate2023}, making it suitable for applications with ephemeral clients in low latency networks and frequently updating databases.
\section{Background}
\label{sec:background}

In this section and throughout the paper, we index the i$^{th}$ element of the vector $\textbf{a}$ as $\textbf{a}[i]$.
We also define $[n]=\{0,1,\cdots,n-1\}$ and $\lfloor\cdot\rceil$ denotes rounding to the nearest integer.
$x\leftarrow D$ denotes the variable $x$ sampled from a distribution $D$ and $x\sample S$ denotes sampling $x$ uniformly from a set $S$.

\subsection{Homomorphic Encryption}
Homomorphic Encryption (HE) is a form of public-key cryptography which permits computation on messages while in encrypted form, without the need to access the secret key. Similar to other public-key cryptosystems, homomorphic ciphertexts are larger than the underlying plaintext.
The ratio between the ciphertext and plaintext is denoted as the \textit{expansion factor}.

Homomorphic encryption is typically used to construct a one-round protocol between a client and a server.
The client encrypts its private input homomorphically and sends the resulting ciphertexts to a server.
This constitutes the client's \textit{request}. The server computes the desired function over the client's encrypted input. The result is then transmitted back to the client as the \textit{response}. 
This work addresses the large response size and aim to reduce it.

\subsection{LWE \& RLWE ciphertexts}
\label{sec:lwe}

\begin{algorithm}[t]
     \caption{Encryption and Decryption of $\mathcal{E}_{LWE}$}
     \label{alg:lwe-encrypt-decrypt}
     \begin{algorithmic}[1]
        \vspace{1mm}
        \Procedure{LWEEncrypt}{$\texttt{sk}, \mu$}
        \State Sample $\textbf{a}\xleftarrow{\$} \ZZ_q^{n}$ and $e \leftarrow \chi$
        \vspace{1mm}
        \State $b = \sum_{i\in[n]} \textbf{a}[i] \cdot \texttt{sk}[i] + \Delta \cdot \mu + e \mod q$
        \vspace{2mm}
        \State \Return $\ct=(\textbf{a},b)$
        \EndProcedure
        \vspace{3mm}
        \Procedure{LWEDecrypt}{$\texttt{sk}, \ct=(\textbf{a},b)$}
        \State $\mu^* = \left(b - \sum_{i\in[n]} \textbf{a}[i] \cdot \texttt{sk}[i]\right) \mod q $
        \vspace{1mm}
        \State $\mu'=\lfloor\mu^* / \Delta\rceil$
        \vspace{2mm}
        \State \Return $\mu'$
        \EndProcedure
        \vspace{3mm}        
        \Procedure{RLWEEncrypt}{$S(X), \mu(X)$}
        
        \State Sample $A(X)\xleftarrow{\$} R_q$ and $E(X) \leftarrow \chi$
        \vspace{2mm}
        \State $B(X) = A(X)\cdot S(X) + \Delta \cdot \mu(X) + E(X) \mod R_q$
        \State \Return $C=(A(X), B(X))$
        \EndProcedure

        \vspace{3mm}
        
        \Procedure{RLWEDecrypt}{$S(X), C$}
        \State $(A(X), B(X)) \leftarrow C$
        \vspace{1mm}
        \State $\mu^*(X) = (B(X) - A(X)\cdot S(X)) \mod R_q$
        \vspace{2mm}
        \State $\mu'(X)=\lfloor\mu^*(X)/\Delta\rceil$
        \vspace{2mm}
        
        \State \Return $\mu'(X)\in R_p$
        \EndProcedure     
     \end{algorithmic}
\end{algorithm}

For this work, we describe a simple version of an encryption system based on the Learning With Errors (LWE)~\cite{regevLatticesLearningErrors2009} assumption which we will denote by $\mathcal{E}_{\text{LWE}}$. The most prominent encryption schemes that have ciphertext of this format are Regev~\cite{regevLatticesLearningErrors2009}, FHEW~\cite{ducasFHEWBootstrappingHomomorphic2015}, and TFHE(CGGI)~\cite{chillottiTFHEFastFully2020}.

$\mathcal{E}_{\text{LWE}}$ uses the following parameters:
dimension $n$, ciphertext modulus $q$, plaintext modulus $p$, $\Delta=\round{q/p}$, a discrete error distribution over $\ZZ_q$ called $\chi$. We sample the secret key, $\texttt{sk}$, from $\ZZ_q^{n}$. The encryption and decryption procedure for $\mathcal{E}_{LWE}$ is shown in \Cref{alg:lwe-encrypt-decrypt}.

Fresh ciphertexts can be compressed to reduce network costs. Since $\textbf{a}$ is sampled at random, we can send the seed used to generate $\textbf{a}$ instead of the vector itself. Concretely, instead of sending $c=(\textbf{a},b)$, the client can produce $\bar{c}=(\theta,b)$ where $\theta\leftarrow\{0,1\}^{\lambda}$ is the seed of a cryptographically secure PRG used to generate $\textbf{a}$, i.e., $\textbf{a}\leftarrow \texttt{PRG}(\theta)$. With this technique, fresh ciphertexts are only $\lambda+\log_2 q$ bits instead of $n\log_2 q$.

Similar to LWE, we can also construct an encryption scheme based on the Ring Learning with Errors (RLWE)~\cite{lyubashevskyIdealLatticesLearning2010} assumption, which we will denote as $\mathcal{E}_{\text{RLWE}}$. Cryptosystems such as BGV~\cite{brakerskiLeveledFullyHomomorphic2012}, BFV~\cite{brakerskiFullyHomomorphicEncryption2012,fan2012somewhat}, and CKKS~\cite{cheonHomomorphicEncryptionArithmetic2017a} have ciphertexts of a similar format.
RLWE ciphertexts are useful since they can encrypt a polynomial, i.e. a vector of numbers, instead of just one scalar. For RLWE encryption, we select a dimension $N$, ciphertext modulus $q$, plaintext modulus $p$, and $\Delta=\round{q/p}$. Define $R_q=\ZZ_q[X]/(X^N+1)$ and $R_p$ similarly. Moreover, define a discrete error distribution $\chi$ over $R_q$.
For key generation, sample $S(X)$ uniformly from $R_q$.

Similar to LWE, we can also compress fresh RLWE ciphertexts by sending the seed used to generate $A(X)$~\cite{aliCommunicationComputationTradeoffs2021a, MicrosoftSEALRelease2023, albadawiOpenFHEOpenSourceFully2022}. Using this technique, the size of a ciphertext can be reduced from $2N\log_2 q$ bits to $\lambda + N\log_2 q$.

\subsection{Private Information Retrieval}
Private Information Retrieval (PIR) is a protocol where a client retrieves an element from a database such that the query is not revealed.
A specific variant of PIR is PIR-with-preprocessing which consists of four routines.

\begin{itemize}
    \item $\hint \leftarrow \textsc{Setup}(\db)$ : Create database-dependant hint
    \item $(\pirstate, \qu) \leftarrow \textsc{Query}(i)$ : Generates the client query
    \item $\ans \leftarrow \textsc{Response}(\db, \hint, \qu)$ : Computes response
    \item $d \leftarrow \textsc{Extract}(\pirstate, \ans)$ : Extracts result from response
\end{itemize}

PIR-with-preprocessing allows the server to perform most of the preprocessing offline such that the online stage is very fast. 
Note that our definition is slightly relaxed compared to previous definitions ~\cite{henzingerOneServerPrice2023, beimelReducingServersComputation2000} as it does not generate a client-specific hint.

\begin{definition}[Correctness]
    A PIR protocol with preprocessing consisting of the four aforementioned routines is $\delta$-correct, if for a domain $\mathcal{D}$, any database $\db\in \mathcal{D}^{N}$ and any $i\in[N]$, 
    \begin{align}
    \PP\left[\db[i] = f \middle|
        \begin{array}{c}
            \hint \leftarrow \textsc{Setup}(\db) \\
            (\st, \qu) \leftarrow \textsc{Query}(i) \\
            \ans \leftarrow \textsc{Respond}(\db, \hint, \qu) \\
            f \leftarrow \textsc{Extract}(\st, \ans)
        \end{array}
    \right] > 1-\delta
    \end{align}.
\end{definition}

Intuitively, the query should not reveal any information about the record that it is querying. This is formalized in the following definition.
\begin{definition}[Security]
     A PIR protocol is $\epsilon$-secure if for any PPT adversary $\adv$ and any $i,j\in[N]$,
    \begin{align}
        | \PP[\adv&(1^N, \qu) = 1 | (\st, \qu) \leftarrow \textsc{Query}(i) ] \\
        & - \PP[\adv(1^N, \qu) = 1 | (\st, \qu) \leftarrow \textsc{Query}(j) ] | \leq \epsilon
    \end{align}
\end{definition}
\section{Additive HE for Smaller FHE Responses}
\label{sec:main}

Ciphertexts that have been processed by the server can not be compressed using the technique mentioned in \Cref{sec:background}. We propose a technique to compress LWE/RLWE ciphertexts using auxiliary information provided by the client.

\emph{Exploiting Linear Phase Evaluation.}
In LWE and RLWE decryption, we compute an intermediate value which is commonly referred to as the \textit{phase}, i.e., $\mu^*$ and $\mu^*(X)$ in \Cref{alg:lwe-encrypt-decrypt}.
Phase evaluation is linear in both schemes and the phase is much smaller than the ciphertext itself.
The main insight behind our solution is to homomorphically compute the phase on the server using encrypted values of the secret key, encrypted under an additive encryption scheme.
Since the phase is much smaller than the original ciphertext, this results in a smaller response size.
In general, our technique can be applied to any encryption scheme that has a linear phase evaluation. Examples of encryption schemes with this property are Regev~\cite{regevLatticesLearningErrors2009}, FHEW~\cite{ducasFHEWBootstrappingHomomorphic2015}, TFHE~\cite{chillottiTFHEFastFully2020}, BFV~\cite{fan2012somewhat,brakerskiFullyHomomorphicEncryption2012}, and BGV~\cite{brakerskiLeveledFullyHomomorphic2012}.


\emph{The Additive Encryption Scheme.}
For the compression protocol, we require an additive encryption scheme which we denote $\mathcal{E}_A$ such that the plaintext space is $\ZZ_m$, for some $m$.
Also, denote the ciphertext space of $\mathcal{E}_A$ as $\mathcal{C}$.
$\mathcal{E}_A$ supports addition and plaintext multiplications.
We denote addition and plaintext multiplication with $\oplus$ and $\otimes$, respectively.
Moreover, denote the secret key generated by $\mathcal{E}_A$ as $\addkey$ and the corresponding encryption and decryption algorithms as $\texttt{AEnc}$ and $\texttt{ADec}$.

Paillier~\cite{paillierPublicKeyCryptosystemsBased1999}, Damgard-Jurik~\cite{damgardGeneralisationSimplificationApplications2001a}, Exponential ElGamal~\cite{elgamalPublicKeyCryptosystem1985}, and Benaloh~\cite{benalohDenseProbabilisticEncryption1994} are examples of cryptosystems that can be used for this purpose.

\subsection{Compressing LWE Ciphertexts}

The ciphertext compression algorithm for LWE and the corresponding modified decryption algorithm is given in \Cref{alg:lwe-compress}.

\begin{algorithm}[H]
    \caption{LWE compression, performed by the server and the corresponding modified decryption process, performed by the client over a compressed ciphertext. The compression key $\ck\in\mathcal{C}^{n}$ is such that $\ck[i]=\texttt{AEnc}(\addkey, \sk[i])$.}
    \label{alg:lwe-compress}
    \begin{algorithmic}[1]
        \Procedure{LWECompress$_{q}$}{$\ck, \ct=(\textbf{a},b)$}
        \Comment{$\ct\in\ZZ^{n}\times\ZZ$}
        \State $x=b$
        \For{$i \in [n]$} 
            \State $x \leftarrow x \oplus \left( (q-\textbf{a}[i]) \otimes \ck[i] \right)$\label{alg:line-multiply}
        \EndFor
        \State \Return $x$ \Comment{$\mu^*=\texttt{ADec}(\addkey, x)$}
        \EndProcedure
    \vspace{3mm}
    \Procedure{ModifiedLWEDecrypt$_{q, p}$}{$\addkey,x$}
	 	\State $\mu^{**} = \texttt{ADec}(\addkey, x) \mod q$
            \label{alg:lwe-mod-decrypt}
	 	\vspace{1mm}
	 	\State $ \mu'' = \lfloor \mu^{**}/\Delta\rceil$
            \Comment{$\Delta=\round{q/p}$}
            \vspace{2mm}
            \State \Return $\mu'' \in \ZZ_{p}$
   \EndProcedure
   \end{algorithmic}
\end{algorithm}

\begin{theorem}[Correctness]
\label{thm:lwe-compress-correct}
    For an LWE ciphertext $\ct\in\ZZ_q^{n+1}$, if $m>q+nq^2$, then $\textsc{LWECompress}_q$ produces a compressed ciphertext which decrypts to the correct message if decrypted using \textsc{ModifiedLWEDecrypt}. More formally, if 

\begin{align*}
    x\leftarrow\textsc{LWECompress}_{q}(\ck, \ct) \\
    \mu'' \leftarrow \textsc{ModifiedLWEDecrypt}_{q,p}(\addkey, x)
\end{align*}
then
$\mu'' = \textsc{LWEDecrypt}(\sk, \ct)$
\end{theorem}

\begin{proof}
In the \textsc{ModifiedLWEDecrypt}$_{q,p}$ procedure (\Cref{alg:lwe-mod-decrypt} of \Cref{alg:lwe-compress}), we calculate 
$b + \sum_{i\in[n]} (q-\textbf{a}[i]) \cdot \sk[i]$, encrypted under additive encryption, which is achievable due to the linear properties of the additive encryption. We know that $\sk[i], \textbf{a}[i]$ and $b$ are elements in $\ZZ_q$ so $0 \leq \sk[i], \textbf{a}[i], b < q$ and 

{\footnotesize
\begin{align}
    b + \sum_{i\in[n]} (q-\textbf{a}[i]) \cdot \sk[i] \leq q + \sum_{i\in[n]} q  \cdot q = q + nq^2 < m .
\end{align}
}

so there is no overflow in the plaintext space of the additive ciphertext.
In $\textsc{ModifiedLWEDecrypt}_{q,p}$ (\Cref{alg:lwe-compress}), we have

{\footnotesize
\begin{align*}
    \mu^{**}\mod q &= \texttt{ADec}(\addkey, x) \mod q \\
    &= \left((b + \sum_{i\in[n]} (q-\textbf{a}[i]) \cdot \sk[i]) \mod m \right) \mod q \\
    &= \left(b + \sum_{i\in[n]} (q-\textbf{a}[i]) \cdot \sk[i] \right) \mod q \\
    &= b - \sum_{i\in[n]} \textbf{a}[i] \cdot \sk[i] \mod q
\end{align*}
}

This is identical to $\mu^*$ in line 1 of \Cref{alg:lwe-encrypt-decrypt}, hence, since the subsequent steps of \textsc{LWEDecrypt} and \textsc{ModifiedLWEDecrypt} are identical, they produce the same response, and the theorem is proven.
\end{proof}

In cryptosystems such as TFHE~\cite{chillottiTFHEFastFully2020}, the secret key is sampled from a binary distribution.
In such a case, we can tighten the inequality required in \Cref{thm:lwe-compress-correct} for correctness because $0\leq\sk[i]\leq 1$. The following corollary summarizes this fact.

\begin{corollary}
    If the LWE secret key is binary and $m>q+nq$, \textsc{LWECompress} produces a compressed ciphertext which decrypts to the correct message if decrypted using \textsc{ModifiedLWEDecrypt}.
\end{corollary}

In Gentry's original construction of a bounded depth encryption scheme, he proposed the idea of using a chain of semantically secure cryptosystems, such that each cryptosystem encrypts the secret key of the next~\cite{gentryFullyHomomorphicEncryption2009}. Gentry proved that if the secret key of each cryptosystem is sampled independently, the composed scheme is also semantically secure.

Let $\mathcal{E}'$ denote the cryptosystem which is the chaining of $\mathcal{E}_{\text{LWE}}$ and $\mathcal{E}_{A}$. The encryption and decryption procedure of $\mathcal{E}'$ is shown in \Cref{alg:lwe-encrypt-decrypt} and \Cref{alg:lwe-compress}, respectively. The secret key of $\mathcal{E}'$ is the combination of the secret keys of $\mathcal{E}_{\text{LWE}}$ and $\mathcal{E}_{A}$. The same holds for the public key as well. Moreover, we also release encryptions of the bits of the secret key of $\mathcal{E}_{\text{LWE}}$ under the secret key of $\mathcal{E}_{A}$. 

\begin{proposition}[Security]
    If $\mathcal{E}_{\text{LWE}}$ and $\mathcal{E}_A$ are semantically secure, then $\mathcal{E}'$ is also semantically secure.
\end{proposition}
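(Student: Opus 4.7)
The plan is to apply a standard hybrid argument in the spirit of Gentry's key-dependent message composition result. I would fix an adversary $\mathcal{A}$ against the semantic security of $\mathcal{E}'$ and define three hybrid distributions interpolating between an encryption of $\mu_0$ and an encryption of $\mu_1$ under $\mathcal{E}'$. In $H_0$ the challenger runs $\mathcal{E}'$ honestly on $\mu_0$: it samples independent keys $\texttt{sk}$ and $s_A$, publishes the public key of $\mathcal{E}_A$ together with the ciphertexts $\bar{\texttt{sk}}[i]=\texttt{AEnc}_{s_A}(\texttt{sk}[i])$, and returns an $\mathcal{E}_{\text{LWE}}$-encryption of $\mu_0$. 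Hybrid $H_1$ is identical except every $\bar{\texttt{sk}}[i]$ is replaced by $\texttt{AEnc}_{s_A}(0)$, so the auxiliary information is independent of $\texttt{sk}$. Hybrid $H_2$ is obtained from $H_1$ by swapping the LWE challenge to $\mu_1$, and finally $H_3$ restores the real key-dependent auxiliary ciphertexts.

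For $H_0 \approx H_1$ and $H_2 \approx H_3$ I would give a reduction to the semantic security of $\mathcal{E}_A$: the reduction samples $\texttt{sk}$ itself, submits the pairs $(\texttt{sk}[i],0)$ to its $\mathcal{E}_A$-challenger, forwards the returned ciphertexts as the auxiliary information, and simulates the LWE ciphertext honestly using $\texttt{sk}$. Any distinguishing advantage between $H_0$ and $H_1$ (resp.\ $H_2$ and $H_3$) directly yields the same advantage against $\mathcal{E}_A$; this step uses in an essential way that $s_A$ is sampled independently of $\texttt{sk}$, exactly the independence assumption Gentry requires. For $H_1 \approx H_2$ I would reduce to the semantic security of $\mathcal{E}_{\text{LWE}}$: the reduction samples $s_A$ itself, simulates the auxiliary ciphertexts as encryptions of $0$ under $\mathcal{E}_A$ (which requires no knowledge of $\texttt{sk}$), and forwards its own LWE challenge on $(\mu_0,\mu_1)$ as the response.

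Combining the three indistinguishabilities by the triangle inequality bounds $\mathcal{A}$'s advantage against $\mathcal{E}'$ by $2\cdot\adv_{\mathcal{E}_A}(\mathcal{B}_A) + \adv_{\mathcal{E}_{\text{LWE}}}(\mathcal{B}_{\text{LWE}})$, which is negligible under the two hypotheses. The main subtlety, and the only place the argument could break, is the circular-looking fact that the scheme publishes encryptions of $\texttt{sk}$ under $s_A$; the reduction to $\mathcal{E}_A$'s semantic security is valid precisely because the LWE ciphertext does not depend on $s_A$, so the $\mathcal{E}_A$-challenger's view when answering the reduction's queries contains no $s_A$-dependent side information. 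This is exactly the situation covered by Gentry's composition lemma, so I would cite that result and confine the explicit write-up to defining the hybrids and the two reductions.
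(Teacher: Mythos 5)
Your proof is correct and matches the paper's approach: the paper does not spell out an argument at all but simply invokes Gentry's composition result for chains of independently keyed, semantically secure cryptosystems, and your four hybrids with the two reductions are precisely the elementary proof of that lemma specialized to a chain of length two, with the independence of $\texttt{sk}$ and $s_A$ used exactly where you say it is. One small point to make explicit in a full write-up: the steps $H_0\approx H_1$ and $H_2\approx H_3$ submit all $n$ pairs $(\texttt{sk}[i],0)$ to the $\mathcal{E}_A$-challenger simultaneously, so they rely on multi-message IND-CPA security; if one assumes only single-message security the bound picks up an additional factor of $n$ via a further standard hybrid, which is still negligible but should be noted.
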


\subsection{Using Smaller Compression Keys}
\label{sec:smaller-compression-key}
In practice, the plaintext space of the additive encryption system could be much larger than is required for the correctness of the compression technique to hold, i.e., $m \gg q + n q^2$.
For example, the plaintext space of Paillier for 128-bit security is 3072 bits, which is much larger than $q + n q^2$ for any common choice of LWE parameters.
We can use this gap to pack multiple bits of the LWE secret key within one additive ciphertext.
Instead of encrypting each bit of the LWE key separately, we encrypt the first $t$ bits of the secret key together into one packed additive ciphertext as $\texttt{pck}_{0-t} = \texttt{AEnc}(\addkey, \sum_{i\in[t]}\sk[i] \cdot \delta^{i})$ for a large enough $\delta$.
Specifically, $\delta$ should be such that $\delta > q + n q^2$ (or $\delta > q + n q$ in the case of binary keys).
On the server side, the server unpacks the secret key by computing $\ck[i] = \delta^{t-1-i} \otimes \texttt{pck}_{0-t}$ for $i\in[t]$.
Compression proceeds as before, with the only difference being that the encrypted phase, calculated by the server in the additive ciphertext, is scaled by a factor of $\delta^{t-1}$.
\appsection{sec:lwe-compress-packed-keys} details the procedures for generating the packed key, unpacking it, and the corresponding modified LWE decryption function.
We use the same function for compressing the ciphertext.


\subsection{Batched Compression}
\label{sec:batched-compression}

To achieve better compression, multiple LWE ciphertexts (encrypted using the same secret key) can be compressed within the same additive ciphertext, which we denote as \textit{batched compression}.
Each LWE ciphertext takes up $\log_2 (q+nq^2)$ bits of the total bitwidth of the plaintext space.
So, if $m$ is the modulus of the plaintext space, then $\floor{\log_2 m / \log_2 (q+nq^2)}$ LWE ciphertexts can be compressed into one ciphertext from the additive cryptosystem.

\Cref{alg:batched-compress} illustrates how to compress $\ell$ LWE ciphertexts within one additive ciphertexts. The corresponding decryption procedure is also shown.
Using \textsc{LWECompress} as a subprocedure allows for better parallelization when compressing many LWE ciphertexts.

\begin{algorithm}[t]
	 \caption{Batch Compression of LWE ciphertexts by the server and the modified decryption procedure, performed by the client. The compression key $\ck$ is such that $\sk[i]={\texttt{ADec}(\addkey, \ck[i])}$ and $cts=\{c_j\}_{j\in [\ell]}$ such that $c_j = (\boldsymbol{\textbf{a}_j}, b_j)\in \ZZ_{q}^{n}\times\ZZ$ and $\gamma = q + n q^2$.}
	 \label{alg:batched-compress}
	 \begin{algorithmic}[1]
    \Procedure{BatchedLWECompress$_{q, \gamma}$}{$\ck, cts=\{c_j\}_{j\in [\ell]}$}
        \For{$j\in{[\ell]}$}
            \State $x_j \leftarrow \textsc{LWECompress}_q(\ck,c_j)$
            \State $x \leftarrow x \oplus \gamma^{j} x_j $
        \EndFor
        \Return $x$
    \EndProcedure
    \vspace{3mm}
    \Procedure{ModifiedBatchedLWEDecrypt$_{q,p, \gamma}$}{$\addkey, x$}
	 	\State $\mu^{**} = \texttt{ADec}(\addkey, x)$
            \For{$j \in [\ell]$}
                \State $\mu_j^{**} =\floor{\mu^{**}/\gamma^j} \mod \gamma$
	       \State $\mu''_j = \lfloor \frac{\mu_j^{**} \mod q}{\Delta} \rceil$\Comment{$\Delta=\round{q/p}$}
            \EndFor
        \Return $\{\mu_j''\in \ZZ_{p}\}_{j\in[\ell]}$
    \EndProcedure
	 \end{algorithmic}
\end{algorithm}

\begin{theorem}[Correctness]
    Let $\ct=\{c_j\}_{j\in [\ell]}$ be a vector of $\ell$ LWE ciphertexts.
    For $\gamma\geq q + n q^2$, if  $m > \gamma^{\ell}$, then \textsc{BatchedLWECompress}$_{q, \gamma}$ produces a compressed ciphertext which, if decrypted using the corresponding modified decryption, decrypts to the vector of $\ell$ plaintexts. More formally, if
    \begin{align*}
        x \leftarrow \textsc{BatchedLWECompress}(\ck, \ct, k) \\
        \{\mu'_j\}_{j\in\ell} \leftarrow \textsc{ModifiedBatchedLWEDecrypt}(
        \addkey, x)
    \end{align*}
    then $\mu'_j = \text{LWEDecrypt}(\sk,c_j)$.
\end{theorem}
\begin{proof}
    By the proof of \Cref{thm:lwe-compress-correct} we know that if $\mu^{**}_j=\texttt{ADec}_{\texttt{s}}(x_j)$, then $0\leq \mu^{**}_j < \gamma = q + nq^2$. Hence, we have
    \begin{align*}
        \mu^{**} = \sum_{j\in[\ell]} \gamma^j \mu^{**}_j \leq \sum_{j\in[\ell]} \gamma^j (\gamma-1) = \gamma^{\ell} - 1 < \gamma^{\ell} < m .
    \end{align*}
    Hence, the plaintext corresponding to $x$, i.e., $\mu^{**}$, does not overflow in the plaintext space of the additive ciphertext.
    If $\mu^*_j$ us equivalent to $\mu^*$ in the \textsc{LWEEncrypt} procedure, then for some value $t$,
    \begin{align*}
        \mu''_j =\floor{\mu^{**}/\gamma^j} \mod \gamma = (\mu^*_j + \gamma \cdot t ) \mod \gamma = \mu^*_j
    \end{align*}
    and the subsequent steps are similar, which proves the theorem.
\end{proof}

\subsubsection{Faster Batched Compression with Expanded Key}
Compression makes use of expensive operations in the additive scheme.
The plaintext multiplication in \Cref{alg:line-multiply} of \Cref{alg:lwe-compress} is the most expensive operation.
In additive schemes such as Paillier and ElGamal, this is equivalent to a modular exponentiation in a large group.

In the batched setting, we can reduce the overhead by precomputing and reusing multiples of the bits of the secret key. If we decompose $(q-\textbf{a}[i])$ as $(b_{t-1}\cdots b_1 b_0)_2 = (q-\textbf{a}[i]) \mod q$ we compute the plaintext multiplication as follows
\begin{align}
    &(q-\textbf{a}[i]) \otimes \ck[i] \\
    &= 2^{t-1} b_{t-1} \ck[i] + \cdots + 2b_{1} \ck[i] + b_{0} \ck[i]
\end{align}
and we can precompute and \textit{extended compression key}, $\eck$, such that $\eck[i][j] = 2^j \ck[i]$ for $j \in [t]$, which can reused for all LWE ciphertexts we want to compress.

\begin{algorithm}[t]
	 \caption{Batch compression of LWE ciphertexts using precomputed powers. The compression key $\ck$ is such that $\sk[i]={\texttt{ADec}(\addkey, \ck[i])}$ and $cts=\{c_j\}_{j\in [\ell]}$ such that $c_j = (\boldsymbol{\textbf{a}_j}, b_j)\in \ZZ_{q}^{n}\times\ZZ$.}
	 \label{alg:batched-compress-precompute}
	 \begin{algorithmic}[1]
    \Procedure{ExpandCompressionKey$_{q}$}{$\ck$}
        \State $\eck[0] = \ck$
        \For {$i \in [t-1]$} \Comment{$t = \ceil{\log_2 q}$}
            \For {$j \in [n]$}
                \State $\eck[i+1][j] = \eck[i][j] \oplus \eck[i][j]$
            \EndFor
        \EndFor
        \Return $\eck$
    \EndProcedure
    \vspace{3mm}
    \Procedure{FastLWECompress$_q$}{$\eck, \ct=(\textbf{a},b)$}
    \State $x=b$
    \For{$i \in [n]$} 
        \State $(b_{t-1}\cdots b_1 b_0)_2 \leftarrow (q-\textbf{a}[i]) \mod q$
        \Comment{$t = \ceil{\log_2 q}$}
        \For {$j \in [t]$}
            \If {$b_j = 1$}
                \State $x \leftarrow x \oplus \eck[j][i]$
            \EndIf
        \EndFor
    \EndFor
    \Return $x$ \Comment{$\mu^*=\texttt{ADec}(\addkey, x)$}
    \EndProcedure
    \vspace{3mm}
    \Procedure{FastBatchedLWECompress$_{q,\gamma}$}{$\ck, cts=\{c_j\}_{j\in [\ell]}$}
        \State $\eck \leftarrow \textsc{ExpandCompressionKey}_{q}(\ck)$
        \State $\gamma = q + n q^2$
        \For{$j\in{[\ell]}$}
            \State $x_j \leftarrow \textsc{FastLWECompress}_q(\eck,c_j)$
            \State $x \leftarrow x \oplus \gamma^{j} x_j $
        \EndFor
        \Return $x$
    \EndProcedure
	 \end{algorithmic}
\end{algorithm}

\begin{corollary}
    Let $\ct=\{c_j\}_{j\in [\ell]}$ be a vector of $\ell$ LWE ciphertexts.
    For $\gamma\geq q + n q^2$, if  $m > \gamma^{\ell}$, if
    \begin{align*}
        x \leftarrow \textsc{FastBatchedLWECompress}(\eck, \ct, k) \\
        \{\mu'_j\}_{j\in\ell} \leftarrow \textsc{ModifiedBatchedLWEDecrypt}(
        \addkey, x)
    \end{align*}
    then $\mu'_j = \text{LWEDecrypt}(\sk,c_j)$.
\end{corollary}

\subsubsection{Rescaling for Compression}

In some instances, it is possible to rescale the elements in the ciphertext to a smaller modulus without altering the underlying message.
This technique, also called modulus switching, is commonly used in the literature to simplify the decryption procedure or control noise growth~\cite{brakerskiFullyHomomorphicEncryption2012}.
However, rescaling is only possible if the noise of the underlying LWE ciphertext is less than a given bound. 
In \appsection{sec:modulus-switching-theorem}, we prove how rescaling is possible for LWE ciphertexts with binary keys, if the noise is less than a certain bound, i.e., less than $\Delta/4$.
Rescaling to a smaller modulus accelerates our compression technique since the scalar multiplication in the additive encryption scheme is done with a smaller scalar.

\subsubsection{Better compression with a smaller scale}
\label{sec:overlapping-noise}

The number of LWE ciphertexts that fit within each additive ciphertext is determined by the scale, i.e., $\gamma=q+nq^2$.
Using a smaller scale would allow us to pack more LWE ciphertexts within each additive ciphertext.
There are two instances where we can use a smaller scale.
First, when the LWE secret key is binary.
In that case, we can use $\gamma=q+nq$ as the scale.
This follows from the fact that in the case of binary keys, $0<\mu_j^{**} \leq \gamma = q+nq^2$.

The second instance where we can reduce the scale is by allowing $\mu_{j}^{**}$ and $\mu_{j+1}^{**}$ to overlap in the additive scheme.
Intuitively, this is possible because the high-order bits of $\mu_{j}^{**}$ are removed when it is modulized by $q$ as part of the modified decryption.
The lower order bits of $\mu_{j+1}^{**}$ are also rounded during the modified decryption so it is possible to add additional error, as long as it does not interfere with the message.
Specifically, if $|e|<\Delta/4$ (instead of the usual condition where $|e|<\Delta/2$ for correct decryption), we can reduce the scale to $\gamma=q^2$ and $\gamma=q$ in the case of non-binary and binary keys, respectively.
Due to space restrictions, we provide proof of the correctness of this technique using a smaller scale under these conditions in the full version of the paper.

\subsection{Compressing RLWE Coefficients}
\label{sec:rlwe-compression}

RLWE ciphertexts also have a linear phase evaluation and hence, can benefit from our compression technique.
However, an RLWE ciphertext is only twice as large as the phase so the compression technique, applied naively, would not yield a significant improvement.
Our approach is beneficial if the user is only interested in some coefficients of the RLWE plaintext and not all of them.

The main observation is that each coefficient of $\mu'(X)$ in the \textsc{RLWEDecrypt} procedure can be calculated separately. Specifically, for $0\leq k \leq N-1$

\makeatletter
    \def\tagform@#1{\maketag@@@{\normalsize(#1)\@@italiccorr}}
\makeatother

{\tiny
\begin{align}
    \mu'&[k] = \lfloor\frac{\mu^*[k]}{\Delta}\rceil \\
    &= \left\lfloor \frac{B[k] - \sum_{i=0}^{k} A[k-i] \cdot S[i] + \sum_{i=k+1}^{N-1} A[N+k-i] \cdot S[i]}{\Delta} \right\rceil
    \label{eq:rlwe-extract-general}
\end{align}
}%

Note that the operations in the numerator are happening modulo $q$. The numerator of \Cref{eq:rlwe-extract-general} is a linear combination of the coefficients of the secret key, hence it can be computed given the encrypted coefficients of the secret key. The complete procedure to compress the coefficient of $X^k$ in an RLWE plaintext and the corresponding decryption function is shown in \Cref{alg:rlwe-compress-response}. Compression of RLWE coefficients is fully compatible with the compact compression keys of \Cref{sec:smaller-compression-key} and batched compression of \Cref{sec:batched-compression}.

\begin{algorithm}[H]
    \caption{
      Compressing Extracted RLWE Coefficient, performed by the server and the corresponding modified decryption process, for the client.
      The compression key is $\ck$ such that $\ck[i]={\texttt{AEnc}_{\texttt{s}}(S[i])}$ and $C\in R_q\times R_q$
    }
	 \label{alg:rlwe-compress-response}
	 \begin{algorithmic}[1]
    \Procedure{RLWECompressCoefficient}{$\ck, C, k$}
        \State $x=B[k]$
        \For{$i \in \{0,1,\cdots,k\}$}
            \State $x \leftarrow x \oplus \left( (q-A[k-i]) \otimes \ck[i]\right)$
        \EndFor
        \For{$i \in \{k+1,\cdots,N-1\}$}
            \State $x \leftarrow x \oplus \left(A[N+k-i] \otimes \ck[i]\right)$
        \EndFor
        
        \State \Return $x$ 
    \EndProcedure
    \vspace{3mm}
    \Procedure{ModifiedRLWEDecrypt$_{q,p}$}{$\addkey, x$}
        \State $\mu^{**}_k = \texttt{ADec}(\addkey, x) \mod q$
        \vspace{1mm}
        \State $\mu''_k= \lfloor \mu^{**}_k / \Delta \rceil$
        \Comment{$\Delta=\round{q/p}$}
        \vspace{2mm}
        
        \State \Return  $\mu''_k \in \ZZ_{p}$
    \EndProcedure
	 \end{algorithmic}
\end{algorithm}

\begin{theorem}[Correctness]
\label{thm:rlwe-compress-correct}
    If $m > q + N q^2$, \Cref{alg:rlwe-compress-response} produces a compressed ciphertext which decrypts to the coefficient of $X^k$ if decrypted using \textsc{ModifiedRLWEDecrypt}$_{q,p}$. More formally, 
    \begin{align*}\normalfont
        x \leftarrow\textsc{RLWECompressCoefficient}(\ck, c, k) \\
        \mu_k'' \leftarrow\textsc{ModifiedRLWEDecrypt}_{q,p}(\texttt{s}, x)
    \end{align*}
    then $\mu_k''$ is equal to the coefficient of $X^k$ in 
    \begin{align*}
        \mu'(X) = \textsc{RLWEDecrypt}(\sk, c)
    \end{align*}
\end{theorem}

We provide the proof of \Cref{thm:rlwe-compress-correct} in \appsection{sec:prove-rlwe-compress}.
Similar to the case of LWE ciphertexts, if the coefficients of the RLWE secret key are binary, we can tighten the condition on $m$ in \Cref{thm:rlwe-compress-correct} such that $m > q + N q$.
The following corollary summarizes this fact.

\begin{corollary}
    If the coefficients of the secret key are binary and $m > q + N q$, \Cref{alg:rlwe-compress-response} produces a compressed ciphertext which decrypts to the coefficient of $X^k$ if decrypted using \textsc{ModifiedRLWEDecrypt}$_{q,p}$.
\end{corollary}

\emph{Security.} A similar argument can be made about the security of compression over RLWE. Let $\mathcal{E}''$ denote the cryptosystem which is the combination of $\mathcal{E}_{RLWE}$ and $\mathcal{E}_{A}$. The following proposition holds regarding security.

\begin{proposition}[Security]
    If $\mathcal{E}_{RLWE}$ and $\mathcal{E}_A$ are semantically secure, then $\mathcal{E}''$ is also semantically secure.
\end{proposition}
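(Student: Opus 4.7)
The plan is to mirror the proof strategy sketched for the LWE case and appeal to Gentry's observation that composing two semantically secure cryptosystems, where one encrypts the secret key of the other, yields a semantically secure scheme provided the two secret keys are sampled independently. Since $\mathcal{E}_{RLWE}$ generates $S(X)$ independently of the key $s_A$ of $\mathcal{E}_A$, this framework applies directly; what changes relative to the LWE case is only the form of the auxiliary information, namely encryptions of the $N$ coefficients $S[0], \dots, S[N-1] \in \ZZ_q$ rather than encryptions of the entries of an LWE secret vector.

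Concretely, I would establish semantic security of $\mathcal{E}''$ through a short sequence of hybrids between the semantic security games for a pair of messages $\mu_0(X), \mu_1(X) \in R_p$. In $H_0$, the real game, the challenge ciphertext is $\text{RLWEEncrypt}_{S}(\mu_0)$ and the public key contains $\bar{\texttt{sk}}[i] = \texttt{AEnc}_{s_A}(S[i])$ for $0 \le i \le N-1$. In $H_1$, each $\bar{\texttt{sk}}[i]$ is replaced with $\texttt{AEnc}_{s_A}(0)$, which is indistinguishable from $H_0$ by an $N$-step sub-hybrid under semantic security of $\mathcal{E}_A$. In $H_2$, the challenge is swapped to $\text{RLWEEncrypt}_{S}(\mu_1)$; since in $H_1$ the public key no longer depends on $S$, a distinguisher immediately yields an $\mathcal{E}_{RLWE}$ adversary. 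A symmetric transition restores the real auxiliary encryptions to obtain $H_3$, the real game for $\mu_1$. Chaining these hybrids shows that no efficient adversary distinguishes the two challenge distributions in $\mathcal{E}''$.

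The main obstacle, and the step where the independence assumption is essential, is the middle reduction from $H_1 \leftrightarrow H_2$ to semantic security of $\mathcal{E}_{RLWE}$. The reduction must manufacture the public key and the auxiliary material $\bar{\texttt{sk}}$ on its own from the received $\mathcal{E}_{RLWE}$ public key, without any access to $S(X)$. This is precisely why $H_1$ rewrites the auxiliary encryptions to encrypt zero: the reduction can freely sample $s_A$ itself and produce $N$ independent encryptions of $0$. Independence of $S(X)$ and $s_A$ is what rules out any hidden coupling between the two keys that would otherwise turn the claim into a circular-security assumption outside the scope of standard semantic security. Everything else, including the correctness of the simulation, is syntactic and carries over verbatim from the LWE proposition.
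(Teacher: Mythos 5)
Your hybrid argument is a direct, self-contained unfolding of exactly the composition lemma the paper invokes from Gentry (a chain of semantically secure cryptosystems, each encrypting the secret key of the next with independently sampled keys, remains semantically secure); the paper itself offers no more than a one-line appeal to that cited result, ported over from the LWE case. Your three-hybrid proof, with the middle transition enabled by first replacing the auxiliary encryptions $\texttt{AEnc}_{s_A}(S[i])$ with encryptions of zero, and your remark that independence of $S(X)$ and $s_A$ is what keeps this from becoming a circular-security assumption, is precisely the standard argument behind Gentry's claim, so this is essentially the same approach, just carried out in detail rather than cited.
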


\section{Related Work}
\label{sec:related-work}

\begin{table}[t]
    \centering
    \caption{
        Taxonomy of different techniques which involve conversion between encryption schemes.
        Techniques which could offer compression for the downloaded ciphertexts are indicated in bold.
    } 
    \resizebox{\columnwidth}{!}{%
    \begin{tabular}{cc|c}
    \toprule
    \textbf{Source} & \textbf{Dest.} & \textbf{Technique} \\
    \midrule
    \midrule
    \multirow{2}{*}{\begin{tabular}{c}Symmetric\\Ciphers\end{tabular}}
    & LWE & Hybrid HE~\cite{dobraunigPastaCaseHybrid2021}\\\cmidrule{2-3}
    & RLWE & Hybrid HE~\cite{canteautStreamCiphersPractical2018, albrechtCiphersMPCFHE2015, dobraunigRastaCipherLow2018, choTranscipheringFrameworkApproximate2021a, dobraunigPastaCaseHybrid2021} \\
    \midrule
        LTV & LWE & \textbf{Compression~\cite{huImprovingEfficiencyHomomorphic2013}}\\
    \midrule
        \multirow{4}{*}{LWE} & LWE &
        \begin{tabular}{c}
            Keyswitching~\cite{chillottiTFHEFastFully2020, ducasFHEWBootstrappingHomomorphic2015} \\
            \textbf{Dim. Reduction~\cite{brakerskiEfficientFullyHomomorphic2011,naehrigCanHomomorphicEncryption2011a}}
        \end{tabular}\\\cmidrule{2-3}
         & RLWE & Scheme Switching~\cite{bouraCHIMERACombiningRingLWEbased2020} \\\cmidrule{2-3}
         & Paillier & \textbf{Our work} \\
    \midrule
        \multirow{2}{*}{\begin{tabular}{c}Multiple\\LWEs\end{tabular}}
        & RLWE & \textbf{RLWE Packing~\cite{chenEfficientHomomorphicConversion2021}} \\\cmidrule{2-3}
        & Damgard-Jurik & \textbf{Our work} \\
    \midrule
        \multirow{5}{*}{RLWE} & LWE & Coefficient Extraction~\cite{chillottiTFHEFastFully2020,bouraCHIMERACombiningRingLWEbased2020} \\\cmidrule{2-3}
        & RLWE$^*$ & Oblivious Expand~\cite{angelPIRCompressedQueries2018a,chenOnionRingORAM2019} \\\cmidrule{2-3}
        & RLWE & \textbf{Modulus Switching~\cite{brakerskiLeveledFullyHomomorphic2012,cheonHomomorphicEncryptionArithmetic2017a}} \\\cmidrule{2-3}
        & Paillier & \textbf{Our work} \\
    \midrule
        \multirow{2}{*}{GSW} & ElGamal & \cite{gentryFullyHomomorphicEncryption2011} \\\cmidrule{2-3}
        & PVW & \textbf{\cite{gentryCompressibleFHEApplications2019}}\\
    \bottomrule
    \end{tabular}%
    }
    \label{tab:sok}
\end{table}

Our technique is predicated on converting the scheme under which the plaintexts are encoded.
This concept has been proposed in the literature before but for different purposes.
In this section, we examine the existing literature which involves scheme switching in any form and outline how our work differs from them.
We specifically point out scheme-switching techniques that result in any compression, but we emphasize that our work achieves better compression in a plug-and-play fashion, which has not been the case in previous work. \Cref{tab:sok} summarizes the related work.

\subsection{Identity Scheme Switching}
Identity scheme switching implies that the destination ciphertext encrypts the exact same message as the original plaintext.
There are many techniques in the literature for conversions, either between schemes or within a scheme, with different purposes besides compression.
Some might also come with the benefit of compression, but the compression rate is not high.
The common feature of these works is that the destination ciphertext accurately encrypts the same message as the initial ciphertext.
As we will see, in other forms of scheme scheme switching, this may not hold.

\subsubsection{Parameter Switching}
There are many ways that the parameters of the encryption system may change. Here we discuss three of such methods:
Key switching, Dimension reduction, and Modulus Switching. We also describe how each method is relevant to the compression.

Key switching is a very common technique and as the name suggests, allows the secret key under which the ciphertext is encrypted, to change.
However, \cite{brakerskiEfficientFullyHomomorphic2011} observed that, for LWE-based schemes in particular, the ciphertext dimension can also change whilst changing the key.
They referred to this as dimension reduction and used it to reduce the size of the decryption circuit for better bootstrapping.
Dimension reduction can be used for compression, by simply switching to smaller parameters, i.e., smaller $n$ and $q$ as per the notation of \Cref{sec:lwe}.
Ciphertexts encrypted with the new parameters are smaller but the compression achieved by this approach is limited given that the ciphertext is still a vector in $\ZZ_q^{n+1}$, albeit with smaller parameters.

Another relevant technique is modulus switching which is primarily used in RLWE-based schemes such as BGV~\cite{brakerskiLeveledFullyHomomorphic2012}, B/FV~\cite{brakerskiFullyHomomorphicEncryption2012,fan2012somewhat}, and CKKS~\cite{cheonHomomorphicEncryptionArithmetic2017a}.
Generally, there are two purposes to modulus switching in these schemes:
First, to limit noise growth with homomorphic operations.
This technique was first proposed by Brakerski and Vaikuntanathan~\cite{brakerskiEfficientFullyHomomorphic2011} and subsequently used by Brakerski et al. to construct the BGV cryptosystem~\cite{brakerskiLeveledFullyHomomorphic2012}.
This technique is also used in the CKKS scheme to switch between levels~\cite{cheonHomomorphicEncryptionArithmetic2017a}.

The second benefit of modulus switching is size reduction before communicating the result to the client. Using our notation from \Cref{sec:lwe}, this technique results in a smaller $q$.
However, this technique is limited by the fact that the size of the ciphertext is still linear in $N$, which significantly impacts the size of the ciphertext.

Note that the ciphertext compression technique mentioned in this work can be used in conjunction with modulus switching. After the modulus has been switched to the smallest value, our compression to an additive scheme is performed. Switching the smaller parameters via modulus switching improves the compression that we can gain from our technique because more ciphertexts can fit within one additive ciphertext.

\subsubsection{Bootstrapping \& Extended Functionality}

Gentry and Halevi~\cite{gentryFullyHomomorphicEncryption2011} use scheme switching as an alternative to squashing the decryption circuit in the bootstrapping process.
They start with ciphertexts encrypted under a Somewhat Homomorphic Encryption (SWHE) scheme.
They express the decryption function of that scheme as a depth-3 circuit of a particular form.
For one step of the decryption, which involves multiplications, they switch to the Elgamal scheme~\cite{elgamalPublicKeyCryptosystem1985} which is a multiplicative encryption scheme.
They perform the multiplications in Elgamal and then switch back to the SWHE scheme by evaluating the decryption circuit of Elgamal.
This approach to bootstrapping avoids the squashing step proposed by Gentry in the original blueprint~\cite{gentryFullyHomomorphicEncryption2009} and does not require the additional assumption that the sparse subset problem is hard.

Boura et al.~\cite{bouraCHIMERACombiningRingLWEbased2020} propose scheme switching as a method to benefit from the features of many cryptosystems.
The authors provide procedures to switch between three Ring-LWE based schemes, B/FV~\cite{fan2012somewhat,brakerskiFullyHomomorphicEncryption2012}, TFHE~\cite{chillottiTFHEFastFully2020}, and CKKS~\cite{cheonHomomorphicEncryptionArithmetic2017a}.

\paragraph{Coefficient Extraction.}
Coefficient Extraction~\cite{chillottiTFHEFastFully2020,bouraCHIMERACombiningRingLWEbased2020} is a specific instance of scheme switching which we elaborate on due to the relevance to compression. Coefficient extraction generates a LWE-based ciphertext which encrypts one coefficient of an RLWE plaintext.
Given that RLWE ciphertexts are usually larger than LWE ciphertexts, conversion from RLWE to LWE can offer compression when only one RLWE coefficient is of interest.s

\paragraph{RLWE Packing.}
The reverse process of coefficient extraction is RLWE packing, which encodes many LWE plaintexts into the coefficients of an RLWE plaintext.
Due to the smaller expansion factor of RLWE ciphertexts, this technique is suitable for compression if enough coefficients in the RLWE plaintext are utilized.
Chen et al.~\cite{chenEfficientHomomorphicConversion2021} demonstrated an efficient method to perform this conversion.

\begin{table*}
    \centering
    \caption{Evaluation of the ciphertext compression technique for a single LWE ciphertext. Three sample parameter sets are chosen for LWE-based ciphertexts.
    The first three columns are common parameter sets used in the Concrete library~\cite{zamaConcreteTFHECompiler2022}.
    The last configuration is the STD128 configuration for CGGI in OpenFHE~\cite{albadawiOpenFHEOpenSourceFully2022}.} \resizebox{\textwidth}{!}{
    \begin{tabular}{c|c|c|c|c|c|c|c|c}
    \toprule
    \multirow{2}{*}{Parameters}
    & \multicolumn{4}{c|}{LWE $(n,\log_2 q)$} 
    & \multicolumn{4}{c}{RLWE $(N,\log_2 q)$} 
    \\
        & (630, 64)
        & (742, 64)
        & (870, 64)
        & (1305, 11)
        & (1024,27)
        & (2048,54)
        & (4096,36)
        & (8192,43)
         \\
    \midrule                     
        Compression Time   & 9.7 ms & 11.0 ms & 12.9 ms & 16.6 ms & 7.2 ms & 23.8 ms & 33.8 ms & 83.3 ms \\
        Compressed Ciphertext    & 768 B   & 768 B  & 768 B & 768 B   & 768 B  & 768 B & 768 B & 768 B \\
        Uncompressed Ciphertext  & 5.05 KB & 5.94 KB & 6.97 KB & 1.80 KB & 3.46 KB & 13.83 KB & 18.44 KB & 44.04 KB \\
        Size Reduction            & \textbf{84.78 \%} & \textbf{87.08} \% & \textbf{88.98\%} & \textbf{57.23\%} & \textbf{77.80\%} & \textbf{94.45\%} & \textbf{95.83\%} & \textbf{98.26\%} \\
    \bottomrule
    \end{tabular}
    }
    \label{tab:evaluation-lwe-compress}
\end{table*}

\subsubsection{Transciphering/Hybrid HE}
The concept of Hybrid Homomorphic Encryption (HHE) was first introduced by Naehrig et al.~\cite{naehrigCanHomomorphicEncryption2011a}. 
The client encrypts their input using a symmetric encryption scheme, which has an expansion factor of one.
The server, having access to the symmetrically encrypted ciphertext and homomorphically encrypted secret key, can perform a homomorphic decryption of the symmetric ciphertext to get a homomorphic ciphertext of the intended message.
The communication burden of sending a large ciphertext is substituted with a computational effort by the server to perform the conversion.
Recent works have attempted to reduce the computational burden on the server by proposing alternative symmetric encryption schemes that are more \textit{HE-friendly}~\cite{dobraunigPastaCaseHybrid2021, albrechtCiphersMPCFHE2015,dobraunigRastaCipherLow2018,dobraunigPastaCaseHybrid2021, canteautStreamCiphersPractical2018, meauxStreamCiphersEfficient2016}

To summarize, this technique involves conversion from a symmetric encryption scheme to a homomorphic encryption scheme.
While this technique is extremely effective in reducing the upload cost, it can not be used in the opposite direction, from homomorphic ciphertexts to symmetric ciphertexts, to reduce the download cost.
The destination ciphertext must be homomorphic so that the decryption function of the source ciphertext can be computed.

\subsection{Posthoc, Approximate Conversions}
Imprecise conversions are helpful in cases when the result does not need to be operated anymore.
Compression is a prime example.
In posthoc conversion, the destination ciphertext doesn't have to encrypt the same message, as long as the original message can be retrieved, given the new message.
In our compression, we skip the second step of decryption in the source encryption, i.e., the rounding step, which is deferred to the client and only the linear step is performed. 
This way, the compression is done with small computational effort, while also enabling the client to retrieve the correct message.

\subsubsection{Compressing LTV Ciphertexts}
Hu~\cite{huImprovingEfficiencyHomomorphic2013} introduced the concept of \textit{secure converters} for converting between cryptographic schemes.
This is achieved by homomorphically evaluating (part of) the decryption circuit of the source scheme under the destination scheme. 
Within that framework, the author proposed homomorphically converting from LTV ciphertexts to Paillier ciphertexts to reduce bandwidth usage from the server to the client.
The conversion, however, is not precise and the Paillier ciphertexts encrypt a noisy version of the initial plaintexts.
Using this approach, a 256x compression rate is achieved whilst communicating ciphertexts back to the client.
However, the LTV cryptographic scheme is not adopted as a practical homomorphic encryption scheme.

\subsubsection{High-rate Compression}
Brakerski et al.~\cite{brakerskiLeveragingLinearDecryption2019} showed how a high-rate compression, arbitrarily close to one, can be achieved over ciphertexts with the \textit{linear-decrypt-and-multiply} characteristic.
Cryptosystems with linear-decrypt-and-multiply can decrypt to any multiple of the message.
Based on the authors, among prevalent encryption schemes, only GSW falls into that category.
Assuming the goal is to encrypt $\{m_0,m_1,\cdots,m_{\ell-1}\}$, then the compression is done by homomorphically decrypting these messages to $\{m_0+e_0,\Delta m_1+e_1,\cdots,\Delta^{\ell-1}+e_{\ell-1} m_{\ell-1}\}$, where $e_i$'s are noise introduced from the homomorphic cryptosystem, similar to LWE.
By adding these messages together, the server obtains one large plaintext, encrypted under an additive ciphertext which is sends to the client.

\subsubsection{GSW Compression}
Gentry et al.~\cite{gentryCompressibleFHEApplications2019} also proposed a method to compress many GSW ciphertexts into high-rate PVW ciphertexts.
The ratio between the plaintext and ciphertext can be arbitrarily close to zero in their construction.
However, this can only be achieved if the underlying aggregate plaintext is very large.
Specifically, for the ratio to be $1-\epsilon$, the aggregate plaintext must be proportional to $1/\epsilon^3$.
The authors described how to construct a PIR protocol from this technique, but their compression techniques is not applicable to any other type of ciphertext.

\section{Evaluation}
\label{sec:evaluation}

We implemented our compression technique as a library in C++ using GMP.
We use Paillier as the additive encryption scheme, which is extended to Damgard-Jurik when we require a larger plaintext space.
We use a 3072-bit modulus for Pallier, composed of two 1536-bit primes, which is the recommended modulus size for 128-bit security~\cite{barkerRecommendationKeyManagement2020}.
We experiment with LWE and RLWE parameters satisfying 128-bit security but our methods can be applied to other LWE and RLWE parameters without any change.

Our code is open source and available upon acceptance. 
We also integrated it into existing FHE libraries like OpenFHE~\footnote{\url{https://github.com/openfheorg/openfhe-development}} and Concrete~\footnote{\url{https://github.com/zama-ai/concrete}} to show the effectiveness.
We also parallelized our implementation to minimize the latency of the compression.
Specifically, we parallelize over the LWE dimension $n$ or the number of LWE ciphertexts that are compression, depending on whichever is larger.
Using this dynamic approach, we use existing cores on our machines even when compressing few ciphertexts.

\paragraph{Experiment Scenarios.}

We experiment under two scenarios: 1) compressing a single LWE ciphertext or RLWE coefficient 2) compressing many LWE ciphertexts or multiple RLWE coefficients.
The former is useful in applications with small outputs such as private inference, whereas the latter is used for applications with large outputs such as image processing.

\subsection{Single Compression Evaluation}

\Cref{tab:evaluation-lwe-compress} summarizes the results for compressing a single LWE ciphertext. We choose LWE parameters adopted in existing libraries implementing variants of LWE encryption~\cite{zamaConcreteTFHECompiler2022,albadawiOpenFHEOpenSourceFully2022}.
The results show that we consistently provide high compression rates. Notably, for $\log_2 q = 64$, our compression rates are over 84\%.

Similarly, for compression of a single RLWE coefficient, we use common parameters for RLWE-based schemes such as BFV~\cite{brakerskiFullyHomomorphicEncryption2012,fan2012somewhat} and BGV~\cite{brakerskiLeveledFullyHomomorphic2012}, which are used in libraries such as SEAL~\cite{MicrosoftSEALRelease2023}, Lattigo and OpenFHE~\cite{albadawiOpenFHEOpenSourceFully2022}.
Recall that compression is compatible with modulus switching so we choose the parameters corresponding to the lowest level in a BFV/BGV parameter set. We achieve over 85\% compression and up to 98\%.





\subsection{Measuring Compression Key Sizes}

Using the technique described in \Cref{sec:smaller-compression-key}, we can pack the compression key and have the server unpack the key.
The unpacking can be done offline, as soon as the server receives the packed key, to reduce latency during compression.
The compression procedure is identical after the key has been unpacked, so we do not report the runtime of compression again.
Instead, we measure the size of the compression key, with and without packing, and report the time required to unpack the key.
We also distinguish two cases, non-binary and binary keys.
In the case of binary keys, we use $\delta=q+nq$ so more bits of the secret key can fit within the same ciphertext.
\Cref{tab:packed-compression-key} shows the size of the packed compression keys in the two cases. Note that even the size of the unpacked key is much smaller than commonly used cryptographic keys such as relinearization keys, automorphism keys, and bootstrapping keys, which could be as large as 100 MB.

\begin{table}[b]
    \centering
    \caption{
        Size of packed compression keys and unpacking time.
        We distinguish the case of binary and non-binary keys since binary keys can be packed more than non-binary keys.
        The Paillier modulus is 3072 bits in all cases.
    }
    \resizebox{\columnwidth}{!}{
    \begin{tabular}{c|c|c|c|c}
    \toprule
         $(n,\log_2 q)$ & (630,64) & (742,64) & (870,64) & (1305,11) \\
    \midrule
        Unpacked Key & 240 KB & 284 KB & 334 KB & 501 KB \\
    \midrule
        Packed Non-binary Key & 22 KB & 26 KB & 30 KB & 11 KB \\
        Unpacking Time & 14 ms & 25 ms & 74 ms & 15 ms \\
    \midrule
        Packed Binary Key & 12 KB & 14 KB & 16 KB & 7 KB \\
        Unpacking Time & 13 ms & 12 ms & 13 ms & 15 ms \\
    \bottomrule
    \end{tabular}
    }
    \label{tab:packed-compression-key}
\end{table}

\subsection{Batched Compression Evaluation}

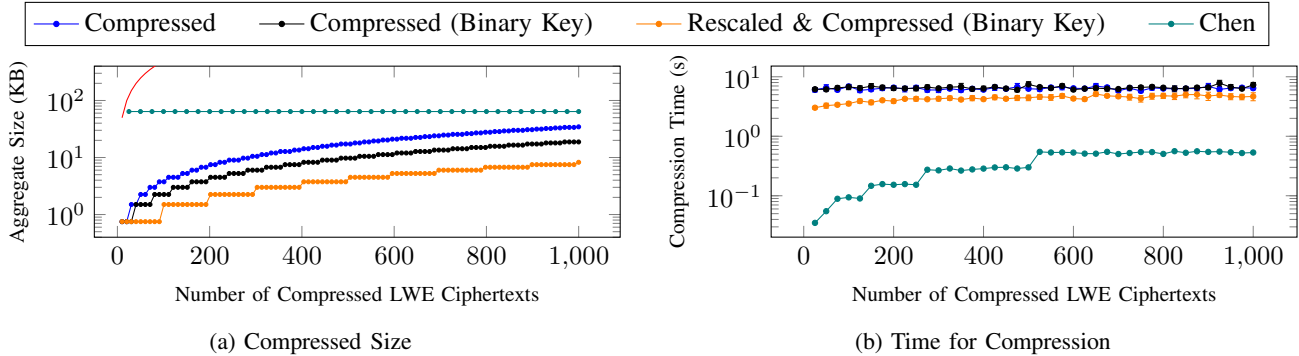
\begin{figure*}[t]
    \centering
    \begin{tikzpicture}
        \begin{axis}[
            height=0.2\columnwidth,
            width=2*\columnwidth, 
            hide axis,
            xmin=0,
            xmax=1,
            ymin=0,
            ymax=1,
            legend columns=-1, 
            legend style={/tikz/every even column/.append style={column sep=0.5cm}},
            legend to name=named, 
        ]
    
            \addlegendimage{color=blue, mark size=1pt, mark=*}
            \addlegendentry{Compressed}
    
            \addlegendimage{color=black, mark size=1pt, mark=*}
            \addlegendentry{Compressed (Binary Key)}

            \addlegendimage{color=orange, mark size=1pt, mark=*}
            \addlegendentry{Rescaled \& Compressed (Binary Key)}
    
            \addlegendimage{color=teal, mark size=1pt, mark=*}
            \addlegendentry{Chen}
        \end{axis}
        \end{tikzpicture}
    
        \ref{named} 
\begin{subfigure}[b]{\columnwidth}
    \centering
    \begin{tikzpicture}[
        declare function={
            logb(\x,\y) = ln(\x)/ln(\y);
            s=1; 
            log2m=3072; 
            n=630; 
            log2q=64; 
            newlog2q=20; 
        },
    ]
    \begin{axis}[
        name=plot1,
        at={(0,0)},
        xlabel = {\footnotesize Number of Compressed LWE Ciphertexts},
        ylabel = {\footnotesize Aggregate Size (KB)},
        domain=0:1000,
        samples=100,
        ymin = 0.4,
        ymax = 400,
        ymode = log,
        xmin = -50,
        xmax = 1090,
        height = 0.45\columnwidth,
        width = \columnwidth,
        legend style={at={(0.2,0.6)},anchor=east} 
    ]

        \addplot [color=red, mark size=0.75pt] {x * (n+1) * log2q / 8192 };
    
        \addplot [color=blue, mark size=0.75pt, mark=*] { (s+1)*log2m * ceil(x / floor(s * log2m / (1 + ceil(logb(n,2)) + 2*log2q))) / 8192 };

        \addplot [color=black, mark size=0.75pt, mark=*] { (s+1)*log2m * ceil(x / floor(s * log2m / (1 + ceil(logb(n,2)) + log2q))) / 8192 };

        \addplot [color=orange, mark size=0.75pt, mark=*] { (s+1)*log2m * ceil(x / floor(s * log2m / (1 + ceil(logb(n,2)) + newlog2q))) / 8192 };

        \addplot [color=teal, mark size=0.75pt, mark=*] table [
            x=slots,
            y expr=\thisrow{compressed_size_B}/1024,
            col sep=comma
        ] {data/chen.csv};
        
    \end{axis}    
    \end{tikzpicture}
    \caption{Compressed Size}
    \label{fig:batched-lwe-compression-n=630-communication}
\end{subfigure}
~
\begin{subfigure}[b]{\columnwidth}
    \begin{tikzpicture}[]
        \begin{axis}[
            name=plot1,
            at={(0,0)},
            xlabel = {\footnotesize Number of Compressed LWE Ciphertexts},
            ylabel = {\footnotesize Compression Time (s)},
            height = 0.45 \columnwidth,
            width = \columnwidth,
            ymode=log,
            legend pos=south east,
            error bars/y dir=both,
            error bars/y explicit
        ]

            \addplot [color=blue, mark size=1pt, mark=*] table [
                x=num_cts,
                y expr=\thisrow{compress_time_us_mean}/1000000,
                y error expr=\thisrow{compress_time_us_std}/1000000,
                col sep=comma
            ] {data/table2_nonbinkey.csv};

            \addplot [color=black, mark size=1pt, mark=*] table [
                x=num_cts,
                y expr=\thisrow{compress_time_us_mean}/1000000,
                y error expr=\thisrow{compress_time_us_std}/1000000,
                col sep=comma
            ] {data/table2_binkey.csv};

            \addplot [color=orange, mark size=1pt, mark=*] table [
                x=num_cts,
                y expr=\thisrow{compress_time_us_mean}/1000000,
                y error expr=\thisrow{compress_time_us_std}/1000000,
                col sep=comma
            ] {data/table2_binkey_switched.csv};

            \addplot [color=teal, mark size=1pt, mark=*] table [
                x=slots,
                y expr=\thisrow{total_time_ms}/1000,
                col sep=comma
            ] {data/chen.csv};
        \end{axis}    
    \end{tikzpicture}
    \caption{Time for Compression}
    \label{fig:batched-lwe-compression-n=630-computation}
\end{subfigure}
    \caption{
        Compressed size and compression time required for compressing LWE ciphertexts with $(n,q)=(630, 2^{64})$ using batched compression. The red line denotes the baseline size of uncompressed LWE ciphertexts.
    }
    \label{fig:compressing-batched-lwe}
\end{figure*}

Next, we evaluate the batched compression of LWE ciphertexts.
In both cases, we use the batched compression algorithm to compress $\ell$ ciphertext into additive Paillier ciphertexts.
Note that for batched compression, we can not use packed compression keys.

We distinguish the case of binary keys from non-binary keys in the experiment, denoted by blue and black lines, respectively.
As mentioned in \Cref{sec:batched-compression}, we set the scale to $\gamma=q + n q^2$ and $\gamma=q + n q$ in the case of non-binary and binary keys, respectively.
We also visualize a case where we rescale to a smaller modulus before compression.

The alternative approach to packing many LWE ciphertexts is RLWE packing~\cite{chenEfficientHomomorphicConversion2021,chillottiTFHEFastFully2020}.
Our reference point for RLWE packing is the work of Chen et al.~\cite{chenEfficientHomomorphicConversion2021} which is state-of-the-art in compression and has better runtime than related work.
This work maps LWE ciphertexts in $\ZZ_q^n\times\ZZ$ to an RLWE ciphertext in $\ZZ_q[X]/(X^n+1)$.

\Cref{fig:batched-lwe-compression-n=630-communication} shows the size after compression as a function of the number of compressed LWE ciphertexts.
\Cref{fig:batched-lwe-compression-n=630-computation} also shows the runtime of batched compression.
The blue and black plots correspond to non-binary and binary keys, respectively.
The orange plot also shows batched compression but over ciphertexts that are rescaled to $r=2^{20}$.
In the case of non-binary keys, we only need one Paillier ciphertext for up to 14 LWE ciphertexts and for binary keys, it is about 26.
Overall, compressing more LWE ciphertexts offers more compression compared to compressing only one LWE ciphertext.
This is because more of the plaintext space of Paillier is being utilized as more ciphertexts are compressed.
We also observe that rescaling improves the runtime and allows more compression, as is expected.
In comparison to the work of Chen et al., our compression protocol offers more than one order of magnitude compression compared to RLWE packing.
However, our compression approach is slower than RLWE packing, particularly due to the use of expensive modular exponentiations.

\section{\protocol{} : PIR using Compression}

In this section, we present \textit{\protocol{}}, our new low-communication PIR protocol.
\protocol{} takes advantage of the compression technique in this work, in combination with techniques in the literature such as layered encryption~\cite{angelPIRCompressedQueries2018a, kiayiasOptimalRatePrivate2015, lipmaaObliviousTransferProtocol2005} and PIR using LWE~\cite{henzingerOneServerPrice2023, davidsonFrodoPIRSimpleScalable2023}.
This results in a PIR protocol with the lowest overall communication costs compared to all related work.
We first describe a simple version of \protocol{} and show how to adapt the construction to work with smaller keys and larger payloads.

\subsection{\protocol{} Description}

We restructure the database as a $(k+1)$-dimensional hypercube and perform PIR using LWE on the first dimension and PIR using Paillier on the subsequent dimensions.
Using LWE in the first dimension improves the runtime significantly, but we add some layers of Paillier to reduce the communication costs.

\Cref{alg:compresspir} shows the detailed description of \protocol{}. We follow the framework of PIR with preprocessing~\cite{liReducingCiphertextExpansion2015} and provide four routines.
The high-level steps for \protocol{} are as follows:
\begin{enumerate}
    \item In the setup phase~(\Cref{compresspir:hint-compute}), the server hint is calculated and stored by the server. This hint can be reused for all queries by any client and updated locally as the database changes.
    \item Upon receiving the query from the client, the server first expands the compression key to use throughout the process~(\Cref{compresspir:expand-key}).
    \item Using $\qu_0$, The server performs PIR using LWE on the first dimension, which is assisted by the hint generated in the setup~(\Cref{compresspir:online-lwe}). This step is similar to that of related work~\cite{henzingerOneServerPrice2023,davidsonFrodoPIRSimpleScalable2023}.
    \item The output of this step is rescaled and compressed using the compression functions and the expanded key to get Paillier ciphertexts.
    \item The Paillier ciphertexts, $\qu_1,\cdots, \qu_{k}$, are then used to do PIR using layered encryption, expanding the size by a factor of two in each layer. The result is sent to the client~(\Cref{compresspir:start-paillier-pir}-\ref{compresspir:paillier-pir-1}).
    \item Upon receiving the response, the client decrypts the Paillier ciphertexts and finally performs a modified LWE decryption to retrieve the response~(\Cref{compresspir:start-paillier-decrypt}-\ref{compresspir:paillier-merge}).
\end{enumerate}

\begin{algorithm}[]
\caption{
  Complete description of \protocol{}. $q$ is the LWE ciphertext modulus, and $r$ is the smallest divisor of $q$ such that $r\geq 2(n+1)p$, $p$ is the plaintext modulus and $\Delta = q/p$.
  Database $\db\in\ZZ_p^{N_0\times d_0}$ where $N=d_0 d_1 \cdots d_k$ and $N_{\ell} = N / (d_0 d_1 \cdots d_{\ell})$ for all $\ell$. Also, as setup $\textbf{A} \sample \ZZ_q^{d_0 \times n}$. Paillier ciphertexts are in $\ZZ_{m^2}$. We denote Paillier homomorphic addition and scalar multiplication by $\oplus$ and $\otimes$, respectively.
}
	 \label{alg:compresspir}
	 \begin{algorithmic}[1]
    \Procedure{Setup}{$\db \in \ZZ_q^{N_0 \times d_0}$}
        \State $\hint = \db \times \textbf{A} \in \ZZ_q^{N_0 \times n}$ \label{compresspir:hint-compute}
        \State \Return $\hint$
    \EndProcedure
    \vspace{3mm}
    \Procedure{Query}{$(i,i_0)\in [N_0]\times[d_0]$}
        \State Generate Paillier keys $(\paillierpk, \paillierkey)$
        \State Sample LWE key $\lwekey\sample\{0,1\}^{n}$
        \State $\ck \leftarrow \textsc{PaillierEncrypt}(\paillierkey, \sk)$ \label{compresspir:compression-key}
        \For {$\ell \in \{1,2,\cdots,k\}$}
            \State $i_{\ell} = \floor{i/N_{\ell}} \mod d_{\ell}$
        \EndFor
        \State $u_j =$ selection vector for index $i_j$, $j\in[k+1]$
        \State Sample $e\leftarrow \chi^n$
        \State $\qu_0 = \textbf{A}\cdot \lwekey + e + \Delta\cdot u_0$ \Comment{$\qu_0\in\ZZ_q^{d_0}$}\label{compresspir:lwe-encrypt}
        \For {$\ell \in \{1,2,\cdots,k\}$}
                \State $\qu_{\ell} = \textsc{PaillierEnc}(\paillierkey, u_{\ell})$ 
            \Comment{$\qu_{\ell}\in\ZZ_{m^2}^{d_{\ell}}$}
                \label{compresspir:paillier-encrypt}
        \EndFor
        \State \Return $(\paillierkey, (\paillierpk, \ck, {\qu}_0, {\qu}_1, \cdots, {\qu}_k))$
    \EndProcedure
    \vspace{3mm}
    \Procedure{Response}{$\db, \hint,\qu=(\paillierpk, \ck,{\qu}_0, {\qu}_1, \cdots, {\qu}_k)$}
        \State $\eck \leftarrow \textsc{ExpandCompressionKey}_{r}(\ck)$ \label{compresspir:expand-key}
        \State $b = \db \cdot {\qu}_0$ \label{compresspir:online-lwe}
        \State $\textbf{D} = [\hint | b] \in \ZZ_r^{N_0 \times (n+1)}$
        \State $\textbf{D}_0\leftarrow$ Rescale elements in $\textbf{D}$ to modulus $r$ \label{compresspir:rescale-online}
        \For {$i\in [N_0]$}
            \Comment{$c_0\in\ZZ_{m^2}^{N_0\times 1}$}
            \State $c_0[i] = \textsc{FastLWECompress}_{r}(\eck, \textbf{D}_0[i])$
            \label{compresspir:fastcompress}
        \EndFor
        \For {$\ell = \{1, \cdots, k\}$} \label{compresspir:start-paillier-pir}
            \State Initialize $c_{\ell}$ with zeros
            \Comment{$c_{\ell} \in \ZZ_{m^2}^{N_{\ell}\times 2^{\ell}}$}
            \For {$t\in [N_{\ell}]$}
                \For {$h \in [2^{\ell-1}]$}
                    \For {$j \in [d_{\ell}]$}
                        \State $u_{0j} \leftarrow c_{\ell-1}[jN_{\ell}+t][h] \mod m$
                        \State $u_{1j} \leftarrow \floor{c_{\ell-1}[jN_{\ell}+t][h] / m}$
                        \State $c_{\ell}[t][2h] \leftarrow c_{\ell}[t][2h] \oplus (\qu_{\ell}[j] \otimes u_{0j}) $ \label{compresspir:paillier-pir-0}
                        \State $c_{\ell}[t][2h+1] \leftarrow c_{\ell}[t][2h+1] \oplus (\qu_{\ell}[j] \otimes u_{1j})$ \label{compresspir:paillier-pir-1}
                    \EndFor
                \EndFor
            \EndFor
        \EndFor
        \State \Return $c_k[0] \in \ZZ_{m^2}^{2^k}$
    \EndProcedure
    \vspace{3mm}
    \Procedure{Extract}{$\st = \paillierkey, f_k \in \ZZ_{m^2}^{2^k}$}
        \For {$\ell \in \{k, k-1, \cdots, 1\} $}
        \label{compresspir:start-paillier-decrypt}
            \State $p_{\ell} = \textsc{PaillierDec}(\paillierkey, f_{\ell})$
            \Comment{$p_{\ell}\in\ZZ_{m}^{2^{\ell}}$}
            \For {$h\in[2^{\ell-1}]$}
                \State $f_{\ell-1}[h] = m\cdot p_{\ell}[2h+1] + p_{\ell}[2h]$
                \label{compresspir:paillier-merge}
            \EndFor
        \EndFor
        \State $f=\textsc{ModifiedLWEDecrypt}_{r, p}(\paillierkey, f_0)$
        \label{compresspir:lwe-dec}
        \State\Return $f$
    \EndProcedure    
  \end{algorithmic}
\end{algorithm}

\begin{theorem}[\protocol{}]
    For LWE parameters $(n, q, \chi_e, \chi_s)$, assume $\chi_e$ is a discrete Gaussian with standard deviation $\sigma$ and $\chi_s$ is a binary distribution. Assume these parameters are $\epsilon_{L}$-secure for LWE with $d_0$ samples and take plaintext modulus $p$ such that $p|q$ and 
    \begin{align}
        q/p > 2p\sigma \sqrt{2d_0\ln (2/\delta)} 
        \label{eq:correctness}
    \end{align}
    Also, assume we instantiate the Paillier cryptosystem with modulus $m$ such that it is $\epsilon_{P}$-secure and $m > q+nq$. Then for a random matrix $\textbf{A}\in\ZZ_{q}^{d_0\times n}$, \protocol{} is a $2(\epsilon_{L}+\epsilon_{P})$-secure PIR scheme on database of size $N$ with items in $\ZZ_p$, with $1-\delta$ success rate.
\end{theorem}

The complete proof of correctness and security are provided in \appsection{appendix:proof}.

\subsubsection{Concrete Costs of \protocol{}}
\label{sec:concrete-costs}

The concrete number of operations in each routine in \protocol{}, along with the party that must perform those operations is listed below.
\begin{itemize}
    \item $\textsc{Setup}$ (Server): $N_0 d_0 n$ multiplications and additions in $\ZZ_q$ (\Cref{compresspir:hint-compute})
    \item $\textsc{Query}$ (Client): $d_0$ LWE encryptions (\Cref{compresspir:lwe-encrypt}) and $n+\sum_{\ell=1}^{k} d_{\ell}$ Paillier encryptions (\Cref{compresspir:compression-key} and \Cref{compresspir:paillier-encrypt})
    \item $\textsc{Response}$ (Server): $N=N_0d_0$ multiplications and additions in $\ZZ_q$ (\Cref{compresspir:online-lwe}), and $n\log r+\frac{1}{2} n\log_2 r$ multiplications in $\ZZ_{m^2}$, i.e., Paillier additions (\Cref{compresspir:expand-key} and \Cref{compresspir:fastcompress}) and $\sum_{\ell\in[k]} 2^{\ell} N_{\ell}$ exponentiations in $\ZZ_{m^2}$, i.e., Paillier scalar multiplications (\Cref{compresspir:paillier-pir-0} and \Cref{compresspir:paillier-pir-1}).
    \item $\textsc{Extract}$ (Client): $2^{k+1}$ Paillier decryptions
\end{itemize}

Similarly, the concrete communication costs of the protocol are listed below. 
\begin{itemize}
    \item Client to Server: $d_0 \log_2 q + 2 \log_2 m (n + \sum_{\ell=1}^{k} d_i)$
    \item Server to Client: $2^{k+1}\log_2 m$
\end{itemize}

\subsection{Updating the Hint} 
When the database changes, the hint must also be updated, but the hint can be updated locally by the server and does not require any communication with the clients.
This is in contrast to other works which rely on hints that require sending updates to the clients~\cite{henzingerOneServerPrice2023, davidsonFrodoPIRSimpleScalable2023}.
Moreover, small changes to the database can be handled with small changes to the hint to reduce the computation cost.
For example, assume $\db'$ is an updated database compared to $\db$. If we denote the hint for the new database by $\hint'$, then the relationship between the previous hint and the new hint would be as follows
\begin{align}
    \hint' = \mask\cdot\db' = \mask\cdot(\db +\db_{\Delta}) = \hint + \mask\cdot\db_{\Delta}
\end{align}
where $\db_{\Delta}$ is the difference between the two databases.
Assuming the change is small, many columns of $\db_{\Delta}$ will be zero. For example, assume that only column $k$ of $\db_{\Delta}$ has non-zero elements, then we only need to calculate $\mask\cdot\db_{\Delta}[:,k]$, which is a matrix-vector multiplication with only $d_0n$ multiplications and additions in $\ZZ_q$.

\subsection{Smaller Keys or Compressed Large Payloads}
\protocol{} can be modified in one of two ways to either reduce the size of the compression keys or produce a more compressed response.
The changes required for these two modifications can not combined, so we propose two variants of \protocol{} which we denote \protocolsingle{} (with smaller compression keys) and \protocolbatched{} (with more compressed responses).
We provide a high-level description of these modifications and leave the full detailed description for the full version.

In \protocolsingle{}, we use the technique from \Cref{sec:smaller-compression-key} to produce packed compression keys.
The server must then unpack the compressed keys before using them.
More precisely, in \Cref{alg:compresspir}, we change \Cref{compresspir:compression-key} to encrypt the key in a compressed way using \textsc{GeneratePackedKey} from \Cref{alg:packed-key-compress}.
We also add a step before \Cref{compresspir:expand-key} to unpack the key using \textsc{UnpackCompressionKey} and change the \Cref{compresspir:lwe-dec} to the corresponding function with packed keys, \textsc{ModifedLWEDecryptPackedKey}.

In \protocolbatched{}, we adapt the protocol to produce more compressed responses when the payload is large, which is done using the batched compression technique.
The high-level description of \protocolbatched{} is as follows:
Assume each database element is an element in $\ZZ_p^{\ell}$ for some $\ell\in\NN$.
Let $\db^j$ denote a database consisting of the $j^{th}$ component of all elements in this database.
Corresponding to this, we generate $\hint^j$, $\textbf{D}^j$, and $\textbf{D}_{0}^{j}$, as is done in \Cref{alg:compresspir}.
We replace \Cref{compresspir:fastcompress} to perform a batched compression in the following manner
\begin{align}
    \textsc{FastBatchedLWECompress}_{r, \gamma}(\eck, \{\mathbf{D}_0^{j}\}_{j\in[\ell]})
\end{align}

and proceed with the rest of the protocol as before.
We also change the decryption function in \Cref{compresspir:lwe-dec} to the corresponding decryption for modified batched decryption.

\subsection{Additional techniques for \protocol{}}

In addition to the techniques mentioned in the previous two section, we use two more techniques to further reduce communication costs.
Firstly, we can use the technique from \Cref{sec:overlapping-noise} in one of two ways 1) Pack more LWE ciphertexts within each Paillier ciphertext to produce a smaller response or 2) Pack more bits of the secret key with each Paillier ciphertext to have a smaller compression key.
As mentioned, to produce correct results using this technique, the error in the LWE ciphertext must be small enough, so we must consider this constraint for correctness.
This changes the correctness condition in \Cref{eq:correctness} to 
\begin{align}
    \label{eq:correctness-modified}
    q/p > 4p\sigma \sqrt{2d_0\ln (2/\delta)} 
\end{align}
which is what we use in our experimental evaluation.
We also use a technique proposed by Beck~\cite{beckRandomizedDecryptionRD2015} to reduce the size of uploaded Paillier ciphertexts, at the cost of small computational overhead for the server.
Due to space restrictions, we provide the proof of correctness and security of \protocol{} using these techniques in the full version of the paper.

\section{PIR Evaluation}

For our evaluation, we first detail the process for selecting the parameters of \protocol{}.
Given the many parameters that must be chosen, this amounts to a non-trivial optimization problem.
After that, we provide runtimes for \protocol{} to demonstrate the scalability. Finally, we compare with related work on PIR with no setup such as WhisPIR, HintlessPIR, and YPIR.
Our results demonstrate that \protocol{} introduces a new category of PIR protocols with low communication that 

\subsection{Parameter Selection for \protocol{}}
The LWE parameters directly effect the performance of \protocol{}.
However, the set of secure LWE parameters is large and experimenting with all parameters set is infeasible.
Hence, we instantiate \protocol{} with three LWE parameters $(n,q,\chi_e, \chi_s)$ which are representative of different tradeoffs.
A smaller $n$ and $q$ results in fewer operations in based on the analysis of \Cref{sec:concrete-costs}, but limit the choice of $p$.
In contrast, higher $n$ and $q$ allow for a larger $p$ and $d_0$.
We choose three $(n,q)$ pairs, and let $\chi_e$ and $\chi_s$ be a discrete Gaussian with standard deviation $\sigma = 6.4$ and uniform binary distribution, respectively.
While in other works based on LWE~\cite{henzingerOneServerPrice2023,davidsonFrodoPIRSimpleScalable2023}, $q$ is chosen as a power of two, e.g., $2^{32}$ or $2^{64}$, to leverage the native CPU word size, we opt for a small $q$ since the bitlength of $q$ determines the total communication cost and number of operations.
Our chosen parameters for LWE provide 128-bit security based on the lattice estimator~\cite{albrechtConcreteHardnessLearning2015}.
We set the failure rate to $\delta=2^{-40}$ and for every $p$ find the upper bound for $d_0$ based on the correctness constraint, \Cref{eq:correctness}.
The upper bound on $d_0$ for each parameter set and each value of $p$ is shown in \Cref{tab:d0-bound}.
For the Paillier cryptosystem, we use a 3072-bit modulus which provides 128-bit security~\cite{barkerRecommendationKeyManagement2020}.


\begin{table}[H]
    \centering
    \caption{Upper bound on $d_0$ for every value of $p$, based on \Cref{eq:correctness-modified}. In all cases, $\sigma=6.4$, and $\delta=2^{-40}$. Dashes indicate cases where no $d_0$ satisfies the equation.}
    \label{tab:d0-bound}
    \begin{tabular}{c|c|c|c}
    \toprule
         \diagbox{$p$}{$(n,q)$} & $(630, 17)$ & $(840, 22)$ & $(1023, 27)$ \\
    \midrule
        $2^1$ & 28825 & 29517568 & 30225990335 \\
        $2^2$ & 1801 & 1844848 & 1889124395 \\
        $2^3$ & 112 & 115303 & 118070274 \\
        $2^4$ & 7 & 7206 & 7379392 \\
        $2^5$ & - & 450 & 461212 \\
        $2^6$ & - & 28 & 28825 \\
    \bottomrule
    \end{tabular}
\end{table}

The two important metrics to evaluate performance are total communication cost and server online runtime.
The remainder of the parameters, such as the dimensions of the database $\{d_i\}$, $p$, and the choice of the protocol (\protocolsingle{} or \protocolbatched{}) are chosen to minimize these costs.
However, the selection of these parameters is a non-trivial optimization problem.
While the communication cost can be derived with a closed-form formula, this is not the case for the server online runtime.
Moreover, many parameter sets fall on the Pareto frontier, i.e., are dominant in either communication or computation.
Hence, for a fixed number of rows and payload size, we aim to find as many parameter sets that fall on the Pareto frontier.
For this, we iterate over the list of all parameter sets and maintain a list which is Pareto optimal.
To narrow down the search space, represent each parameter set by the number of operations in the different steps, as calculated in \Cref{sec:concrete-costs}.
Parameter sets which are worse than another parameter set in all steps are trivially excluded.
Moreover, we use logistic regression to predict if a parameter set A dominates another parameter set B, given the concrete number of operations in the steps of the protocol.
This further reduces the space of parameter sets to a manageable size, which we can run experimentally.

\subsection{Performance of \protocol{}}
\Cref{fig:zippir-costs-per-num-rows} visualizes the communication and computation cost of \protocol{} as a function of the database size, with the goal of retrieving at least one bit from the database.
Using the procedure described in the previous section, we maintain the list of Pareto optimal parameters for each database size and plot them.
For each database size, each point on the communication graph corresponds to a point on the runtime graph, i.e., the point with higher communication has lower computation and vice-versa.
Within the communication graph, we also include the minimum required communication for related work on low-communication PIR.

There are several important observations from this graph.
Firstly, we observe that \protocolsingle{} is the best option, given that the requested payload is small.
Cases which \protocolbatched{} only appear when the payload size is large.
Second, for small databases, the communication cost of \protocol{} is small, as opposed to all other protocols in the literature, which have a lower bound on communication due to the use of large cryptographic keys.
Lastly, the minimum communication cost of \protocol{} grows sublinearly, roughly proportional to $|\db|^{0.27}$, which demonstrates the scalability of the protocol.

\newcommand{\myScatterClasses}{
    scatter/classes={
        small single={mark=*,mark size=0.6pt,yellow},
        medium single={mark=*,mark size=0.8pt,orange},
        large single={mark=*,mark size=1pt,red},
        small batched={mark=*,mark size=0.6pt,white},
        medium batched={mark=*,mark size=0.8pt,white},
        large batched={mark=*,mark size=1pt,white}
    }
}

\newcommand{\protocolsinglesize}{1.5pt}
\newcommand{\protocolsinglecolor}{red}

\newcommand{\protocolbatchedsize}{1.5pt}
\newcommand{\protocolbatchedcolor}{blue}

\newcommand{\hintlesssize}{1.5pt}
\newcommand{\hintlesscolor}{cyan}

\newcommand{\whispirsize}{1.5pt}
\newcommand{\whispircolor}{teal}

\newcommand{\ypirsize}{1.5pt}
\newcommand{\ypircolor}{green}

\begin{figure}[t]
    \centering
    \begin{subfigure}{\columnwidth}
        \centering
        \begin{tikzpicture}
            \begin{axis}[
                ylabel={\footnotesize Online Time (s)},
                \myScatterClasses,
                width=\textwidth,
                height=0.4\textwidth,
                xmode=log,
                log basis x=2,                   
                ymode=log,
                log basis y=10,                   
                xtick={
                        8388608, 33554432, 134217728, 536870912, 2147483648, 8589934592, 34359738368, 137438953472, 549755813888, 2199023255552    
                },
                xticklabels={}
            ]
            \pgfplotstableread[col sep=comma]{data/Total_Size_vs_N_s1.csv}\datatable
            \addplot[
                scatter, 
                only marks,
                scatter src=explicit symbolic
            ] table [meta=mode, x=N input, y=Server Online Time] {\datatable};
            \end{axis}
        \end{tikzpicture}
    \end{subfigure}
    ~
    \begin{subfigure}{\columnwidth}
        \centering
        \begin{tikzpicture}
            \begin{axis}[
                xlabel={Database Size},
                ylabel={\footnotesize Total Size (KB)},
                \myScatterClasses,
                width=\textwidth,
                height=0.6\textwidth,
                xmode=log,
                log basis x=2,                   
                ymode=log,
                log basis y=10,       
                xtick={
                    16777216,
                    134217728,
                    1073741824,
                    8589934592,
                    34359738368,
                    137438953472,
                    4398046511104
                    },
                xticklabels={
                    2MB,
                    16MB,
                    128MB,
                    1GB,
                    4GB,
                    16GB,
                    524288
                },
                tick label style={font=\footnotesize}, 
                legend pos = north west
            ]
            
            \addplot[mark=none, \hintlesscolor, samples=2, domain=8000000:160000000000] {387};
            \addlegendentry{\scriptsize HintlessPIR (Lower Bound)}

            \addplot[mark=none, \whispircolor, samples=2, domain=8000000:160000000000] {441};
            \addlegendentry{\scriptsize WhisPIR (Lower Bound)}

            \addplot[mark=none, \ypircolor, samples=2, domain=8000000:160000000000] {846};
            \addlegendentry{\scriptsize YPIR (Lower Bound)}
            
            \pgfplotstableread[col sep=comma]{data/Total_Size_vs_N_s1.csv}\datatable
            \addplot[
                scatter, 
                only marks,
                scatter src=explicit symbolic
            ] table [meta=mode, x=N input, y=Total Size (KB)] {\datatable};

            
            \end{axis}
        \end{tikzpicture}
    \end{subfigure}
    \caption{
        Communication cost and Server Online Runtime as a function of the database size. Each point in the upper graph as a corresponding point in the lower graph.
        We also plot the minimum communication required for other protocols in the literature.
        The yellow, orange, and red points correspond to $n=(630,17),(840,22),(1023, 27)$, respectively.
    }
    \label{fig:zippir-costs-per-num-rows}
\end{figure}
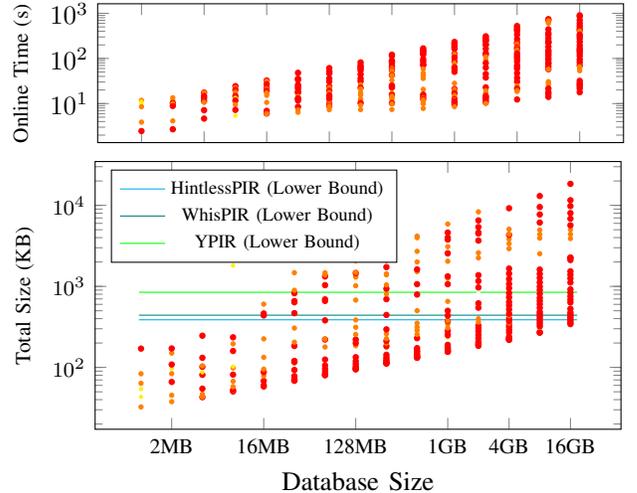

\subsection{Evaluating PIR with No Setup}
We compare \protocol{} with other PIR protocols without setup by measuring communication and computation costs for different database sizes, with the goal of retrieving at least one bit.
We report our measurements in \Cref{fig:eval-pir-comm-comp} in four graphs, corresponding to four different database sizes.
For each database size, we include points corresponding to related work such as HintlessPIR, WhisPIR, and YPIR. Other PIR protocols have communication costs that are much higher than these works.

From these graphs, we can make the following observations.
\protocol{} offers a low communication alternative to existing work, such that in some configurations, we require less than 25\% of the total communication cost of related work.
However, this low communication comes with higher computation costs, which can be addressed in future work.


\begin{figure*}[t]
    \centering
    \begin{tikzpicture}
        \begin{axis}[
            height=0.2\columnwidth,
            width=2*\columnwidth, 
            hide axis,
            xmin=0,
            xmax=1,
            ymin=0,
            ymax=1,
            legend columns=-1, 
            legend style={/tikz/every even column/.append style={column sep=0.5cm}},
            legend to name=namedtradeoff, 
        ]
            \addlegendimage{color=\protocolsinglecolor, mark size=2pt, mark=*}
            \addlegendentry{\protocolsingle{}}
    
    
            \addlegendimage{color=\hintlesscolor, mark size=2pt, mark=*}
            \addlegendentry{HintlessPIR}
    
            \addlegendimage{color=\whispircolor, mark size=2pt, mark=*}
            \addlegendentry{WhisPIR}

            \addlegendimage{color=\ypircolor, mark size=2pt, mark=*}
            \addlegendentry{YPIR}
            
        \end{axis}
        \end{tikzpicture}
    
        \ref{namedtradeoff} 

    \begin{subfigure}{0.24\textwidth}
        \centering
        \begin{tikzpicture}
            \begin{axis}[
                xlabel={\footnotesize Server Online Time (s)},
                ylabel={\footnotesize Total Size (KB)},
                \myScatterClasses,
                width=1.1\textwidth,
                height=0.75\textwidth,
                ymode=log,
                log basis y=2,   
                ymax=2000,
                tick label style={font=\footnotesize}, 
            ]
            \pgfplotstableread[col sep=comma]{data/Total_Size_vs_Server_Online_Time_2147483648x1=256.00MB_all.csv}\datatable
            \addplot[
                scatter, 
                only marks,
                scatter src=explicit symbolic
            ] table [meta=mode, x=Server Online Time, y=Total Size (KB)] {\datatable};
            
            \addplot[
                scatter,
                only marks,
                mark=*, mark size=\hintlesssize, \hintlesscolor
            ] coordinates {
                (0.575,1260)
            };

            \end{axis}
        \end{tikzpicture}
        \caption{$|\db|=$ 0.25 GB}
    \end{subfigure}
    \hfill
    \begin{subfigure}{0.24\textwidth}
        \centering
        \begin{tikzpicture}
            \begin{axis}[
                xlabel={\footnotesize Server Online Time (s)},
                \myScatterClasses,
                width=1.2\textwidth,
                height=0.75\textwidth,
                ymode=log,
                log basis y=2,   
                ymax=2000,
                tick label style={font=\footnotesize}, 
            ]
            \pgfplotstableread[col sep=comma]{data/Total_Size_vs_Server_Online_Time_4294967296x1=512.00MB_all.csv}\datatable
            \addplot[
                scatter, 
                only marks,
                scatter src=explicit symbolic
            ] table [meta=mode, x=Server Online Time, y=Total Size (KB)] {\datatable};
            
            \addplot[
                scatter,
                only marks,
                mark=*, mark size=\hintlesssize, \hintlesscolor
            ] coordinates {
                (0.768,1639)
            };

            \end{axis}
        \end{tikzpicture}
        \caption{$|\db|=$ 0.5 GB}
    \end{subfigure}
    \hfill  
    \begin{subfigure}{0.24\textwidth}
        \centering
        \begin{tikzpicture}
            \begin{axis}[
                xlabel={\footnotesize Server Online Time (s)},
                \myScatterClasses,
                width=1.2\textwidth,
                height=0.75\textwidth,
                ymode=log,
                log basis y=2,   
                xmax=100,
                ymax=2450,
                tick label style={font=\footnotesize}, 
            ]
            \pgfplotstableread[col sep=comma]{data/Total_Size_vs_Server_Online_Time_8589934592x1=1024.00MB_all.csv}\datatable
            \addplot[
                scatter, 
                only marks,
                scatter src=explicit symbolic
            ] table [meta=mode, x=Server Online Time, y=Total Size (KB)] {\datatable};
            \addplot[
                scatter,
                only marks,
                mark=*, mark size=\whispirsize, \whispircolor
            ] coordinates {
                (1.691,564)
                (1.077,620)
                (1.037,878)
                (0.814,1044)
            };
            
            \addplot[
                scatter,
                only marks,
                mark=*, mark size=\hintlesssize, \hintlesscolor
            ] coordinates {
                (1.03,2131)
            };

            \addplot[
                scatter,
                only marks,
                mark=*, mark size=\ypirsize, \ypircolor
            ] coordinates {
                (0.428,858)
            };
            \end{axis}
        \end{tikzpicture}
        \caption{$|\db|=$ 1 GB}
    \end{subfigure}
    \hfill
    \begin{subfigure}{0.24\textwidth}
        \centering
        \begin{tikzpicture}
            \begin{axis}[
                xlabel={\footnotesize Server Online Time (s)},
                \myScatterClasses,
                width=1.2\textwidth,
                height=0.75\textwidth,
                ymode=log,
                log basis y=2,   
                ymax=2500,
                tick label style={font=\footnotesize}, 
            ]
            \pgfplotstableread[col sep=comma]{data/Total_Size_vs_Server_Online_Time_68719476736x1=8192.00MB_all.csv}\datatable
            \addplot[
                scatter, 
                only marks,
                scatter src=explicit symbolic
            ] table [meta=mode, x=Server Online Time, y=Total Size (KB)] {\datatable};
            \addplot[
                scatter,
                only marks,
                mark=*, mark size=\whispirsize, \whispircolor
            ] coordinates {
                (8.412,852)
                (7.432,963)
                (5.984,1183)
                (6.254,2047)
            };            

            \addplot[
                scatter,
                only marks,
                mark=*, mark size=\hintlesssize, \hintlesscolor
            ] coordinates {(2.3, 2128)};

            \addplot[
                scatter,
                only marks,
                mark=*, mark size=\ypirsize, \ypircolor
            ] coordinates {(0.992, 1512)};
            
            \end{axis}
        \end{tikzpicture}
        \caption{$|\db| = $ 8 GB}
    \end{subfigure}
    \caption{Communication cost vs. server online runtime for several database sizes. The red, orange, and yellow points correspond to ZipPIR with different parameters, as described \Cref{fig:eval-pir-comm-comp}.}
    \label{fig:eval-pir-comm-comp}
\end{figure*}
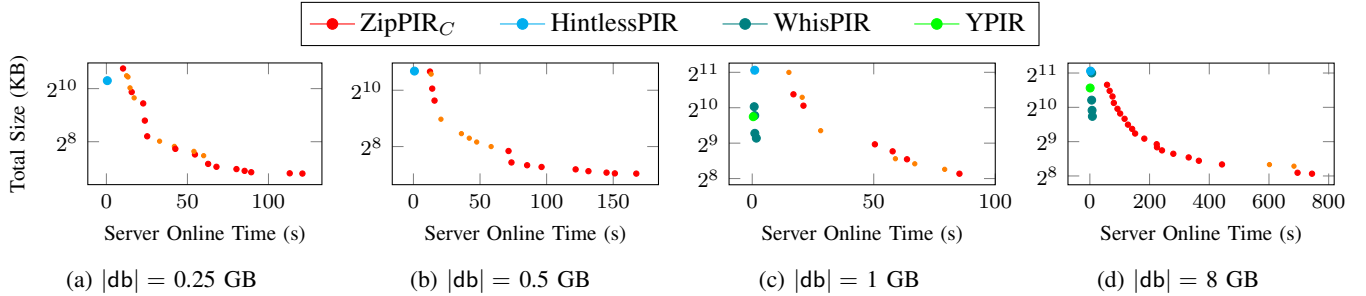

\section{Related Work on PIR}
Computational PIR (CPIR) protocols follow one of three approaches:
1) the server gives a \textit{hint} to the client 2) the client sends cryptographic keys to the server 3) there is no setup, hint, or apriori key exchange.
We describe each approach briefly, the advantages and disadvantages of each approach and list related work.

\subsection{Hint-based PIR}
One approach is for the server to generate a database-dependant hint which is transmitted to the client before the query is issued.
The objective of the hint is to speed up subsequent queries.
SimplePIR~\cite{henzingerOneServerPrice2023} and FrodoPIR~\cite{davidsonFrodoPIRSimpleScalable2023} are two recent works that propose a PIR protocol based on LWE with a client-independent hint.
The hint size is $O(\sqrt{N}n)$ for $N$ database rows and LWE dimension of $n$.
All clients use the same hint which helps respond quickly to PIR queries and achieve very high throughput (up to 10 GB/s).
However, the hint is a high upfront cost (100 MB for a 1 GB database) and must be recalculated and redistributed to the clients every time the database is updated.
The authors show how to update the client hint with a small amount of communication.
DoublePIR extends SimplePIR so that the hint that must be sent to the client is smaller but the overall throughput is less.
In recent work, Henzinger et al. used an improved version of SimplePIR for a private web search application to avoid sending a large hint to the client in a method similar to our work, but with the use of an RLWE-based cryptosystem~\cite{henzingerPrivateWebSearch2023}.

\subsection{PIR with Setup}
Another category of works assumes auxiliary information is sent before the start of the protocol, usually in the form of cryptographic keys.
The cost of sending these keys is amortized over many queries but requires per-client storage on the server.
While this approach is good if there is an established connection between the client and server, it is a high upfront cost.
Moreover, the public keys allow the server to correlate different queries that the client makes so it is not suitable to combine with anonymity networks.
Henzinger et al. also showed that such long-term persistent keys expose the client to state-recovery attacks that could compromise past queries.
Despite these disadvantages, the online time in such protocols is very small and if sufficient queries are made, the runtime and communication cost of setup is amortized.

Works that follow this model include SealPIR~\cite{angelPIRCompressedQueries2018a}, MulPIR~\cite{Ali2019CommunicationComputationTI}, OnionPIR~\cite{mugheesOnionPIRResponseEfficient2021}, Constant-weight PIR~\cite{mahdaviConstantweightPIRSingleround2022}, Pantheon~\cite{ahmadPantheonPrivateRetrieval2022}, FastPIR~\cite{ahmadAddraMetadataprivateVoice2021}, Spiral (and its variants)~\cite{menonSPIRALFastHighRate2022}, and SparsePIR~\cite{patelDonBeDense2023}.

\subsection{PIR without Hints or Setup}
The previous approaches require an established connection between a client and server to amortize the cost of the hint or cryptographic keys across many queries.
For applications where the client only performs a few queries, previous solutions are impractical
Naively applying previous solutions would require the cryptographic keys to be sent as part of the query, resulting in large queries.
Hence, the third approach is to design a PIR protocol that does not require precomputed hints or large cryptographic keys.
Our work also falls in this category and achieves the lowest total communication cost of all PIR protocols in the literature.

HintlessPIR~\cite{liHintlessSingleServerPrivate2023} is a protocol which expands on SimplePIR to remove the need to send the hint.
In short, HintlessPIR retrieves the necessary row of the hint from the server, essentially delegating the step which requires the hint to the server.
YPIR~\cite{menonYPIRHighThroughputSingleServer2024} also takes a similar approach and retrieves the necessary row of the hint using high-rate RLWE ciphertexts.

WhisPIR~\cite{castroWhisPIRStatelessPrivate2024}, on the other hand, expands on the protocols with setup and aims to reduce the number of required cryptographic keys.
The authors propose a PIR protocol focused on being stateless, i.e., working well for ephemeral clients and having low communication.
Two main contributions of WhisPIR are
1) modifications to reduce the number of cryptographic keys that are required 2) not performing relinearization after homomorphic multiplications.
Using these techniques along with a careful choice of parameters, WhisPIR achieves a communication cost that is smaller than related work.

\section{Conclusion}
In this work, we proposed a method for reducing server response sizes in client-server protocols using homomorphic encryption.
Specifically, we showed how to compress LWE ciphertexts, sent from the server to the client, up to 90\% for single ciphertexts and 99\% for many ciphertexts.
Using our compression technique, we proposed \protocol{}, a low-communication PIR protocol, suitable for ephemeral clients and low-latency networks. 
We evaluated both our compression technique and \protocol{} and showed that \protocol{} can query large databases with only 200-500KB of communication.



\bibliographystyle{plain}
\bibliography{references}

\begin{appendices}

\section{LWECompress Using Packed Compression Keys}
\label{sec:lwe-compress-packed-keys}

The necessary procedures for using a packed compression key is given in \Cref{alg:packed-key-compress}. The $\textsc{GeneratePackedKey}$ procedure generates the packed key from the LWE secret key.
The unpacking procedure computes the compression key from the packed compression key by scaling the packed key with different values.
By packing in this particular manner, the compression procedure can be done similar to before, without any changes.
The final change is made in the decryption.
The response is not necessarily in the lower order bits of the additive ciphertext ciphertext anymore, so a division is required before continuing with the rest of the LWE decryption procedure.

\begin{algorithm}[t]
	 \caption{Procedures for using a packed compression key,       including generating the packed key, unpacking it, and the corresponding modified decryption function.
        }
	 \label{alg:packed-key-compress}
	 \begin{algorithmic}[1]
    \Procedure{GeneratePackedKey}{$\addkey,\sk$}
        \State $t = \floor{\frac{0.5\log_2 m}{\log_2\delta}}$
        \For {$i\in[\ceil{n/t}]$}
            \State $r \leftarrow \delta^{-(t-1)} (\sum_{j\in[t]}\sk[it+j] \cdot \delta^{j}) \mod m$
            \State $ \pck_{i} \leftarrow \texttt{AEnc}(\addkey, r)$  
        \EndFor
        \State \Return $\pck$
    \EndProcedure
    \vspace{3mm}
    \Procedure{UnpackCompressionKey$_q$}{$\pck$}
        \For {$i\in[\ceil{n/t}]$}
            \For {$j\in[t]$}
                \State $\ck[it+j] \leftarrow \delta^{t-1-j} \otimes \pck[i]$
            \EndFor
        \EndFor
    \State \Return $\ck$
    \EndProcedure
    \vspace{3mm}
    \Procedure{ModifiedLWEDecryptPackedKey$_{q, p}$}{$\addkey,x$}
        \State $y \leftarrow \delta^{(t-1)}  \texttt{ADec}(\addkey, x) \mod m$ 
        \State $\mu^{**} = \floor{y / \delta^{(t-1)}} \mod q$
        \vspace{1mm}
        \State $ \mu'' = \lfloor \mu^{**}/\Delta\rceil$
        \Comment{$\Delta=\round{q/p}$}
        \vspace{2mm}
        \State \Return $\mu'' \in \ZZ_{p}$
   \EndProcedure
    \end{algorithmic}
\end{algorithm}

\section{Proof of RLWE Compression}
\label{sec:prove-rlwe-compress}

\begin{proof}

Line 1 of \Cref{alg:rlwe-compress-response} computes 
$$
B[k] + \sum_{i=0}^{k} (q-A[k-i]) \cdot S[i] + \sum_{i=k+1}^{N-1} A[N+k-i] \cdot S[i]\
$$
encrypted under additive encryption, which is possible due to the linear properties.
We know that all coefficients of $A(X)$, $B(X)$, and $S(X)$ are elements in $\ZZ_q$, hence

{\footnotesize
    \begin{align*}
        B[k] + \left(\sum_{i=0}^{k} (q-A[k-i]) \cdot S[i]\right)
        + \left(\sum_{i=k+1}^{N-1} A[N+k-i] \cdot S[i]\right)\\
        \leq q + \left(\sum_{i=0}^{k} q \cdot q\right) + \left(\sum_{i=k+1}^{N-1} q \cdot q\right)
        = q + Nq^2 < m
    \end{align*}
}%
so there is no overflow in the plaintext space of the additive cryptosystem.

{\tiny
    \begin{align*}
        & \mu^{**}_{k} = \texttt{ADec}_{s}(x) \mod q \\
        & = \left(\left(B[k] + \sum_{i=0}^{k} (q-A[k-i]) \cdot S[i] \right. \right.\\
        & ~~~~~~~~~~~~~~~~~~~~~~~~~~ \left.\left.+\sum_{i=k+1}^{N-1} A[N+k-i] \cdot S[i]\right) \mod m \right) \mod q \\
        & = \left( B[k] + \sum_{i=0}^{k} (q-A[k-i]) \cdot S[i] + \sum_{i=k+1}^{N-1} A[N+k-i] \cdot S[i]\right) \mod q \\
        & = B[k] - \sum_{i=0}^{k} A[k-i] \cdot S[i] + \sum_{i=k+1}^{N-1} A[N+k-i] \cdot S[i] \mod q
    \end{align*}
}%

which is equivalent to the $k^{th}$ coefficient of 
$$
    \mu^*(X) = B(X) - A(X) \cdot S(X) \mod R_q
$$
which can be seen by expanding the equation.
Given that line 16 of \Cref{alg:lwe-encrypt-decrypt} performs rounding coefficient-wise, it produces the same result as line 10 of \Cref{alg:rlwe-compress-response}.
\end{proof}

\section{Modulus Switching Theorem}
\label{sec:modulus-switching-theorem}

Modulus switching is a known technique, mainly for RLWE schemes for various reasons.
We use it as well but for LWE-based schemes.
Particularly, if the error in an LWE ciphertext is not too high, the modulus can be smaller. The following theorem summarizes this fact.

\begin{lemma}
    Define LWE dimension $n$, ciphertext modulus $q$, plaintext modulus $p$, such that $p|q$.
    Also, assume $\sk\in\{0,1\}^{n}$ is a binary secret key.
    Assume $ct=(\textbf{a},b)$ is an LWE ciphertext encrypting message $m\in \ZZ_p$ such that $b = \sum \textbf{a}[i] \sk[i] + e + \frac{q}{p}m \mod q$ such that $|e|<\frac{q}{4p}$.
    Now define $ct' = (\textbf{a}',b')$ where $\textbf{a}'[i]=\round{\frac{r\textbf{a}[i]}{q}}$ and $b'=\round{\frac{r b}{q}}$. If $r\geq 2(n+1)p$ and then 
    {\footnotesize
    \begin{align}
        \textsc{LWEDecrypt}_{r,p}(\sk, ct) = \textsc{LWEDecrypt}_{q,p}(\sk, ct') = m
    \end{align}
    }
\end{lemma}

\newcommand{\half}{\frac{1}{2}}

\begin{proof}
    By definition, we can find a $k\in\ZZ$ such that
    \begin{align}
        b - kq = \sum \textbf{a}[i] \sk[i] + e + \frac{q}{p} m
    \end{align}
    So
    \begin{align}
        & b' - \sum \textbf{a}[i]' \cdot \sk[i] \\
        &= \round{\frac{r b}{q}} - \sum \round{\frac{r\textbf{a}[i]}{q}} \cdot \sk[i]  \\
        &\leq \frac{r b}{q} + \half - \sum (\frac{r\textbf{a}[i]}{q} - \half) \cdot \sk[i] \\
        &= \frac{r}{q}(b' - \sum \textbf{a}[i]\sk[i]) + \half + \half \sum \sk[i] \\
        &= \frac{r}{q} ( kq + e + \frac{q}{p} m) + \half + \half \sum \sk[i] \\
        &= kr + \frac{r}{p} m + \frac{er}{q} + \half + \half \sum \sk[i]
    \end{align}
    Now if we name $e' = \frac{er}{q} + \half + \half \sum \sk[i]$ then it suffices to have $|e'| < \frac{r}{2p}$ to have correct decryption, and using the assumptions we can see that 
    {\footnotesize
    \begin{align}
        |e'| = |\frac{er}{q} + \half + \half \sum \sk[i]| \leq |\frac{q}{4p}\frac{r}{q}| + |\frac{n+1}{2}| \leq |\frac{r}{2p}|
    \end{align}
    }
\end{proof}

\section{Definitions}
Our compression technique and proposed PIR protocol rely on two important problems, Learning with Errors and the security of the Paillier cryptosystem.

\subsection{Learning with Errors}
The Learning with Errors (LWE) assumption is parameterized by dimension $n\in\NN$, modulus $q\in\NN$, an error distribution $\chi_e$ over $\ZZ$, and a secret key distribution $\chi_s$ over $\ZZ_q$, and number of samples $m\in\NN$. 
The LWE hardness assumption states that the following two distributions
\begin{align*}
    \mathcal{D}_0 &= \{ (\mask, \mask\cdot\sk+e ) | \mask\sample\ZZ_{m\times n},~\sk\leftarrow\chi_s,~e\leftarrow\chi_e^m\} \\
    \mathcal{D}_1 &= \{ (\mask, r ) | \mask\sample\ZZ_{m\times n},~r\sample\ZZ_q^m\}
\end{align*}
are computationally indistinguishable. 
More concretely, $(n,q,\chi_e,\chi_s)$-LWE is $\epsilon$-hard, if for any PPT time adversary $\adv$ has at most $\epsilon$ advantage in distinguishing the two distributions.

\subsection{Paillier Cryptosystem}
The Paillier cryptosystem~\cite{paillierPublicKeyCryptosystemsBased1999} is a semantically secure cryptosystem based on the hardness of the composite residuosity assumption.
We say that Paillier is $\epsilon_{P}$-secure, if any PPT adversary has at most $\epsilon_{P}$ advantage in the IND-CPA game.

\section{Security \& Correctness of ZipPIR}
\label{appendix:proof}

\begin{theorem}[Correctness]
    For LWE parameters $(n,q,\chi)$ where $\chi$ is a discrete Gaussian with standard deviation $\sigma$, plaintext modulus $p$ such that $p|q$, and failure rate $\delta$ such that 
    \begin{align}
        \Delta > 2p\sigma \sqrt{2d_0\ln (2/\delta)} 
    \end{align}
    where $\Delta=q/p$ and for random $\mask\in\ZZ_q^{d_0\times n}$, for any database $\db\in\ZZ_{p}^{N_0\times d_0}$, and any query $(i,i_0)\in[N_0]\times[d_0]$, if 
    \begin{align}
        \hint\leftarrow\textsc{Setup}(\db) \\
        (\paillierkey, \qu) \leftarrow \textsc{Query}((i,i_0)) \\
        \ans \leftarrow \textsc{Response}(\db, \qu) \\
        f \leftarrow \textsc{Extract}(\paillierkey, \ans)
    \end{align}
    then $\PP[\db[i][i_0] = f] > 1 - \delta$.
\end{theorem}

\begin{proof}
We break down the proof into several steps.
First, we prove that $\textsc{LWEDecrypt}_{q,p}(\sk, D[j]) = \db[j][i_0]$ for every $j\in[N_0]$ with probability $\delta$.
We see that 
\begin{align}
    D[j] & = [H[j]|b[j]] \\
    & = [\db\cdot\mask[j] | \db\cdot \qu_0[j] ]\\
    & = (\db \cdot [\mask|\qu_0])[j]\\
    & = (\db[j] \cdot [\mask|\qu_0])
\end{align}
Now, note that in the first step of LWE decryption, we compute the following
\begin{align}
    & D[j][n] - D[j][:n] \cdot \sk \\
    & = \db[j] \cdot \qu_0 - \db[j] \cdot \mask \cdot \sk \\
    & = \db[j] \cdot (\mask \cdot \sk + e + \Delta u_0) - \db[j] \cdot \mask \cdot \sk \\
    & = \db[j] \cdot e + \db[j] \cdot \Delta u_0 = e_0 + \Delta \db[j][i_0]
\end{align}
where we define $e_0=\db[j] \cdot e$.
Note the required bound for correct decryption is simply $|e_0|<\Delta/2$, but we check for a stricter condition which will become necessary in the next step.
Specifically, we will check the probability that $|e_0|<\Delta/4$. Assume that the standard deviation of the discrete Gaussian distribution $\chi$ is $\sigma=\frac{s}{2\pi}$ for some $s>0$.
We follow the analysis of Henzinger et al.\cite[Theorem C.1]{henzingerOneServerPrice2023} and we see that for any $T>0$,  
\begin{align}
    \PP\left[ | \db[j] \cdot e | \geq T s \eudist{\db[j]} \right] < 2\exp(-\pi T^2)
\end{align}
where $\eudist{\cdot}$ denotes the Euclidean norm.
We plug in $T = \frac{\Delta}{4s\eudist{\db[j]}}$ and also observe that $\eudist{\db[j]} \leq \sqrt{d_0 (p/2)^2}$. Simplifying the equation and plugging in the fact that $s=\sigma\sqrt{2\pi}$, we see that 

\begin{align}
    \PP\left[ | \db[j] \cdot e | \geq \Delta/4 \right] &< 2\exp(-\pi T^2) \\
    &= 2\exp(-\pi (\frac{\Delta }{ 4s\eudist{\db[j]}}) ^2) \\
    &\leq 2\exp(-\pi (\frac{\Delta}{4\sigma \sqrt{2\pi} \frac{p}{2}\sqrt{d_0}})^2) \\
    &\leq 2\exp(-(\frac{\Delta}{2p \sigma\sqrt{2 d_0}})^2)
\end{align}

so we only require the last equation to be less than $\delta$ for the theorem to hold, which will occur if and only if

\begin{align}
    2\exp(-(\frac{\Delta}{2p \sigma\sqrt{2 d_0}})^2) < \delta
    \iff \Delta > 2p\sigma \sqrt{2d_0\ln (2/\delta)} 
\end{align}

So given that $|e_0| < \Delta/4 < \Delta/2$ with high probability, the decryption of $D[j]$ will succeed with high probability.

In the next step of the proof, we exploit the fact that $|e_0| < \Delta/4$ with high probability to use the modulus switching theorem from \appsection{sec:modulus-switching-theorem}. Hence, with probability $1-\delta$, 
\begin{align}
    \textsc{LWEDecrypt}(\sk, D_0[j]) = \textsc{LWEDecrypt}(\sk, D[j])
\end{align}

In the next step, we prove that for $\ell\in[k+1]$
\begin{align}
    f_{\ell} = c_{\ell}[i_{\ell+1}N_{\ell+1} + \cdots + i_{k}N_{k}]
    \label{eq:f-to-c-transform}
\end{align}
which we prove by induction. By definition, we know that $f_{k} = c_k[0]$. 
Assume that \Cref{eq:f-to-c-transform} holds for every $\ell \geq \ell'$ for some $\ell'$. We will show that \Cref{eq:f-to-c-transform} holds for $\ell=\ell'-1$.
For succinctness, define $t=i_{\ell+1}N_{\ell+1} + \cdots + i_{k}N_{k}$ and $\dec(\cdot)=\textsc{PaillierDecrypt}(\paillierkey, \cdot)$. Now for any $h\in[2^{\ell}]$ we have
\begin{align}
    f_{\ell-1}[h]
    & = m \cdot p_{\ell}[2h+1] + p_{\ell}[2h] \\
    & = m \cdot \dec(f_{\ell}[2h+1]) + \dec(f_{\ell}[2h]) \\
    & = m \cdot \dec(c_{\ell}[t][2h+1]) +\dec(c_{\ell}[t][2h]) \\
    & = m \cdot \floor{c_{\ell-1}[i_{\ell}N_{\ell} + t][h]/m} \\
    &   + c_{\ell-1}[i_{\ell}N_{\ell}+t][h] \mod m \\
    & = c_{\ell-1}[i_{\ell}N_{\ell}+t][h]\\
    & = c_{\ell-1}[i_{\ell}N_{\ell}+i_{\ell+1}N_{\ell+1} + \cdots + i_{k}N_{k}][h]
\end{align}
and by combining this for all values of $h$, we see that 
$$
    f_{\ell-1} = c_{\ell-1}[i_{\ell}N_{\ell} + \cdots + i_{k}N_{k}]
$$
We now have all the pieces to prove the full theorem. 
Due to the last step, we can see that $f_{0} = c_{0}[i_{0}N_{0} + \cdots + i_{k}N_{k}]=c_0[i]$.
Note that we used the fact that based on the definition of $i_\ell$, we have $i=i_{0}N_{0} + \cdots + i_{k}N_{k}$.
So  
\begin{align}
   f & = \textsc{ModifiedLWEDecrypt}_{r, p}(\paillierkey, f_0) \\
     & = \textsc{ModifiedLWEDecrypt}_{r, p}(\paillierkey, c_0[i]) \\
     & = \textsc{LWEDecrypt}_{r, p}(\sk, \textbf{D}_0[i]) \\
     & = \textsc{LWEDecrypt}_{q, p}(\sk, \textbf{D}[i]) \\
     & = \db[i][i_0]
\end{align}
where the second to third line holds since
\begin{align*}
    c_0[i] = \textsc{FastLWECompress}_{r}(\eck, \textbf{D}_0[i])
\end{align*}
and this proves the theorem.
\end{proof}

\end{appendices}

\end{document}